\newtheorem  {theorem}                  {Theorem}
\newtheorem* {theorem*}                   {Theorem}
\newtheorem {lemma}[theorem] {Lemma}
\newtheorem* {prop*}     {Proposition}
\newtheorem {Corollary}[theorem]                 {Corollary}
\theoremstyle{definition}
\newtheorem {defi}[theorem] {Definition}
\newtheorem {Remark} [theorem]         {Remark}
\newtheorem* {Example*}    {Example}
\def\Z{\mathbb{Z}}
\newcommand{\pp}[2]{\frac{\partial#1}{\partial#2}}
\title{Hydrodynamic and symbolic models of computation with advice}
\date{}
\author{Robert Cardona}
\address{Robert Cardona, Departament de Matem\`atiques i Inform\`atica, Universitat de Barcelona, Gran Via de Les Corts Catalanes 585, 08007 Barcelona, Spain; Centre de Recerca Matemàtica, Campus de Bellaterra, Edifici C, 08193, Barcelona, Spain.}
\email{robert.cardona@ub.edu}
\thanks{Robert Cardona acknowledges financial support from the Margarita Salas postdoctoral
contract financed by the European Union-NextGenerationEU, as well as from the LabEx IRMIA, the Universit\'e de Strasbourg and Instituto de Ciencias Matemáticas. This work was partially supported by the AEI grant PID2019-103849GB-I00 / AEI / 10.13039/501100011033, AGAUR grant 2017SGR932 and the project Computational,
dynamical and geometrical complexity in fluid dynamics -  AYUDAS FUNDACIÓN
BBVA A PROYECTOS INVESTIGACIÓN CIENTÍFICA 2021.}
\begin{document}

\begin{abstract}
Dynamical systems and physical models defined on idealized continuous phase spaces are known to exhibit non-computable phenomena, examples include the wave equation, recurrent neural networks, or Julia sets in holomorphic dynamics. Inspired by the works of Moore and Siegelmann, we show that ideal fluids, modeled by the Euler equations, are capable of simulating poly-time Turing machines with polynomial advice on compact three-dimensional domains. This is precisely the complexity class $P/poly$ considered by Siegelmann in her study of analog recurrent neural networks. In addition, we introduce a new class of symbolic systems, related to countably piecewise linear transformations of the unit square, that is capable of simulating Turing machines with advice in real-time, contrary to previously known models.
\end{abstract}

\maketitle

\section{Introduction}

Computational aspects of dynamical and physical systems can be studied from a variety of intertwined perspectives such as numerical simulation, computability theory, computational complexity, or analog computation. The last one understands a dynamical system as a computing device that takes an input (the initial condition) and reaches some region of the phase space encoding the output of the process. Combining symbolic dynamics and the Turing machine model, Moore showed in his seminal work \cite{Mo, Mo90} that even low-dimensional dynamical systems are capable of universal computation, thus unveiling the undecidability of some of their properties. Since then, several dynamical systems coming from physical models have been shown to be capable of simulating universal Turing machines. Examples include 3D optical systems~\cite{RTY}, analog recurrent neural networks~\cite{S95}, high dimensional potential wells \cite{T1} and more recently incompressible fluids in various contexts \cite{CMP2,CMP4,CMPP2}.\\

Beyond Turing-computability arises computability with advice: computational models that can compute more than the classical Turing machines (other examples include Turing's oracle machines \cite{Tu}). Dynamical systems modeled in continuous phase spaces allow the presence of real numbers and infinite precision that can lead to non-computable phenomena. Even if those dynamical systems can represent physical models that are highly idealized and hence physically non-realizable, it is interesting from a theoretical of point of view to understand which models can exhibit such computational behavior. For example, the wave equation admits non-computable solutions even if one chooses computable initial data \cite{PR}, see also \cite{WZ, GZ}. Some examples in purely dynamical contexts include the existence of non-computable Julia sets \cite{BY}, and polynomial planar flows with a non-computable number of periodic orbits \cite{GZ2}. This phenomenon is expected to disappear under perturbations in compact spaces, see \cite{BSR}. A complexity class that contains non-computable languages is $P/poly$, which is the set of languages recognized by polynomial-time Turing machines with polynomial advice (see Section \ref{ss:TM} for details). In her influential work \cite{Sig, Sig1, Sig2}, Siegelmann showed that neural networks with real weights can simulate those machines in polynomial time. Other discrete dynamical systems were shown to be computationally equivalent to $P/poly$ by Bournez and Cosnard \cite{BC, Bo}.\\

 In this work, we establish that ideal fluids on three-dimensional geometric domains are also capable of simulating polynomial-time Turing machines with polynomial advice. Recall that given a three-dimensional manifold, with or without boundary, the motion of an ideal fluid (i.e. incompressible and without viscosity) is modeled by the Euler equations
 \begin{equation*}
\begin{cases}
\frac{\partial}{\partial t} u + \nabla_u u &= -\nabla p\,, \\
\operatorname{div}u=0\,,
\end{cases}
\end{equation*}
where $p$ stands for the hydrodynamic pressure and $u$ is the velocity field of the fluid, which is a non-autonomous vector field on $M$ tangent to its boundary. Here $\nabla_u u$ denotes the covariant derivative of $u$ along itself, and $\operatorname{div}$ is the divergence associated with the Riemannian metric $g$. A stationary solution to the Euler equations is an autonomous vector field on $M$ whose integral curves represent the particle paths of a fluid in equilibrium.
\begin{theorem}\label{thm:main}
Given a polynomial-time Turing machine with polynomial advice $(T,a)$, there exists a three-dimensional toroidal domain $U$ equipped with some Riemannian metric $g$, and a stationary solution to the Euler equations in $(U,g)$ that simulates $T$ in polynomial time.
\end{theorem}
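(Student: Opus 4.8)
The plan is to follow the Moore–Siegelmann philosophy as implemented geometrically by Cardona–Miranda–Peralta-Salas–Presas for steady Euler flows, with one new ingredient: the non-uniform advice is absorbed into a single real parameter of the construction. \textbf{Step 1 (from $(T,a)$ to a machine with a real parameter).} Since $|a(n)|\le p(n)$ for a polynomial $p$, I would pack the whole advice sequence $\{a(n)\}_{n\ge 0}$ into one real number $\alpha\in(0,1)$: write $\alpha$ in a fixed base $b$, reserving for index $n$ a block of $p(n)+O(\log n)$ digits (the overhead marking block boundaries) so that the $n$-th block begins at a digit position $N(n)$ polynomial in $n$ and encodes $a(n)$. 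Then I construct a Turing machine $T'$ that on input $w$ with $|w|=n$ (i) recovers $n$ and computes $N(n)$, (ii) reads the digits of $\alpha$ in positions $N(n),\dots,N(n{+}1){-}1$ to recover $a(n)$, and (iii) runs $T$ on $(w,a(n))$. All three steps cost time polynomial in $n$, so $T'$ is a polynomial-time machine using, apart from finite built-in data, only the real number $\alpha$; replacing it by a reversible machine with polynomial overhead (Bennett) yields a reversible polynomial-time machine $\widehat T$ with the same accept/halt behaviour.

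\textbf{Step 2 (from $\widehat T$ to an area-preserving diffeomorphism depending on $\alpha$).} I would encode the configurations of $\widehat T$ as points of a Cantor set $C\subset[0,1]^2$ and realize one step of $\widehat T$ as a piecewise-affine map of the square, the digit-reading gadget of Step 1 contributing finitely many extra affine pieces whose constants depend on $\alpha$. Arranging, as in the cited constructions, that this map preserves Lebesgue measure on the relevant region, I smooth it and make it compactly supported to get an area-preserving diffeomorphism $\varphi_\alpha$ of an open disk $D$, isotopic to the identity, together with a computable embedding $w\mapsto x_w\in D$ such that the orbit $\{\varphi_\alpha^{\,k}(x_w)\}_{k\ge 0}$ enters a fixed open rectangle $R\subset D$ within a number of steps polynomial in $|w|$ precisely when $w$ is accepted, and never enters $R$ otherwise.

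\textbf{Step 3 (suspension, Reeb realization, Euler flow).} As $\varphi_\alpha$ is isotopic to the identity and supported away from $\partial D$, its mapping torus is a solid torus $U\cong D^2\times S^1$, and the suspension flow of $\varphi_\alpha$ is a nonvanishing vector field on $U$, tangent to $\partial U$, with return map $\varphi_\alpha$ on the section $D\times\{0\}$. Following the contact-geometric realization used to build Turing-complete steady Euler flows, I would deform this field within its class so that it becomes a reparametrization of the Reeb field $X$ of a contact form on $U$, keeping $\varphi_\alpha$ as return map and the flow time between consecutive returns bounded above and below by positive constants. By the Etnyre–Ghrist correspondence, $X$ is a Beltrami field for a metric $g$ adapted to the contact form, hence a stationary solution of the Euler equations on $(U,g)$; the halting region is a small thickening of $R\times\{0\}$. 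Because return times lie in a fixed compact interval, the polynomial step bound of Step 2 becomes a polynomial bound on the flow time needed to reach the halting region, so the steady fluid on $(U,g)$ simulates $T$ (with advice $a$) in polynomial time. The non-computable datum $\alpha$ now lives inside $\varphi_\alpha$, and hence inside $X$ and $g$: this is the analog-computation mechanism, exactly as real weights work in Siegelmann's networks.

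\textbf{Main obstacle.} The crux is the complexity bookkeeping: \emph{every} passage must be polynomially faithful. One must ensure that the layout of $\alpha$ makes the $n$-th advice block addressable in $\mathrm{poly}(n)$ digit reads and that the reading gadget is a fixed finite affine template with only $\alpha$-dependent constants; that the reversible simulation and the smoothing preserve area-preservation, compact support, and the polynomial halting estimate; and that in the suspension/Reeb step the discrete-to-continuous time ratio stays in a fixed compact interval while $U$ genuinely carries a metric making $X$ Beltrami. A secondary point, to match $P/poly$ exactly rather than exceeding it, is that $\varphi_\alpha$, the contact form, and $g$ should depend on $\alpha$ only through finitely many smooth (indeed affine or analytic) pieces, so that no non-computability beyond the advice is introduced.
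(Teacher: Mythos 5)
Your proposal follows essentially the same route as the paper: absorb the advice into a single real number that enters as an $\alpha$-dependent affine piece (a translation) of a Moore-style area-preserving Cantor-block map, pass through Bennett reversibility to a compactly supported area-preserving disk diffeomorphism, realize it as the return map of a (reparametrized) Reeb flow with controlled return time on a solid torus, and invoke the Etnyre--Ghrist correspondence to obtain the metric and the stationary Euler flow, with polynomial bookkeeping at each stage. The only differences are cosmetic: the paper packages the advice via Siegelmann's concatenated string $a_\infty$ and an auxiliary preprocessing machine rather than block addressing, and it cites the Reeb-realization theorem of Cardona--Miranda--Peralta-Salas--Presas and then explicitly rescales the Reeb field so that the return time is exactly constant, rather than deforming the suspension flow and settling for return times bounded between positive constants.
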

There are several ways in which a continuous system can simulate a Turing machine, but generally, it roughly means that each step-by-step computational process of the machine is encoded in the evolution of some orbit of the system. In the statement, we have chosen for simplicity a domain diffeomorphic to a solid torus, however, one can choose as well any closed three-manifold or $\mathbb{R}^3$ for example. To establish Theorem \ref{thm:main}, we first need to introduce a new notion of simulation (see Definition \ref{def:simu}), inspired by the classical ones in the literature \cite{Br,Mo, GCB1, GCB2, BGP}, that provides a well-defined notion of time complexity in the context of conservative ODEs. Then the key step is showing that one can simulate Turing machines with polynomial advice via an analog shift that is mostly injective, which allows us to embed its evolution in a conservative map defined on a finite set of rectangles in a disk (Theorem \ref{thm:blocks}). This requires several technical adaptations from the case of Turing machines without advice as in \cite{Mo, CMPP2}. Finally, we extend the map by rectangles to an area-preserving diffeomorphism of the disk that is then realized as the first-return map of a steady Euler flow on some transverse disk-like section. We conclude by proving that the resulting flow simulates (in polynomial time) the starting Turing machine with advice according to Definition \ref{def:simu}.

We point out that Theorem \ref{thm:main} gives a lower bound on the computational capacity of stationary ideal fluids, and an interesting problem that we do not address here is finding upper bounds for this capacity.\\

In the second part of this paper, we continue the investigation of dynamical systems capable of simulating machines with advice by introducing a new family of symbolic systems that we call ``countable generalized shifts" and that generalize those introduced by Moore \cite{Mo} in a completely different way than analog shift maps \cite{Sig}. We study some dynamical and computational properties of these systems, showing that they can compute $P/poly$. This symbolic model can be partially embedded in the evolution of a countably piecewise linear map of the unit square. This provides an alternative symbolic model for computation with polynomial advice, which in addition has some advantages with respect to previously known dynamical models such as real-time simulation.\\

\textbf{Organization of the paper.} In Section \ref{sec:Symb}, we review the definitions of generalized shift and analog shift map. In Section \ref{sec:Diff}, we prove that there exist area-preserving diffeomorphisms of the disk that simulate in polynomial time Turing machines with advice. Section \ref{s:STflows} starts by introducing time complexity in conservative ODEs by defining an appropriate notion of simulation. It is then shown that there are Euler flows simulating any polynomial-time Turing machine with polynomial advice according to such definition. Finally, in Section \ref{s:CGS} we introduce countable generalized shifts, study some of their dynamical properties and analyze their computational power.\\

\textbf{Acknowledgements:} The author is grateful to Cristopher Moore, whose useful correspondence about transformations of the disk preserving the square Cantor set inspired this work. Thanks to Daniel S. Gra\c{c}a for helpful comments concerning the time complexity of computations in continuous systems, and to Daniel Peralta-Salas for useful discussions.

\section{Symbolic dynamics}\label{sec:Symb}
In this section, we recall several definitions, such as Turing machine with advice, Moore's generalized shifts \cite{Mo} and the analog shift map, a generalization proposed by Siegelmann \cite{Sig}.
\subsection{Turing machines with polynomial advice}\label{ss:TM}
We define a Turing machine $T=(Q,q_0,q_{halt},\Sigma,\delta)$ by the following data:
\begin{itemize}
\item[-] A finite set $Q$ of ``states'' containing two particular (distinct) states: the initial state $q_0 \in Q$ and the halting state $q_{halt} \in Q$.
\item[-] A finite set $\Sigma$ which is the ``alphabet'' and that has cardinality at least two. It contains a specific symbol denoted by $0$ that is also called the ``blank symbol".
\item[-] A transition function $\delta:Q\setminus \{q_{halt}\} \times \Sigma \longrightarrow Q\times \Sigma \times \{-1,0,1\}$.
\end{itemize}

A pair $(q,t)\in Q\times \Sigma^{\mathbb{Z}}$ is a configuration of the machine if the symbols of $t$ are all zero except for finitely many of them. We say in this case that the tape $t$ is compactly supported. Turing machines can be defined as well over tapes with infinitely many non-zero symbols. Hence if $A\subset \Sigma^{\mathbb{Z}}$ denotes the set of sequences that have all but finitely many symbols equal to zero, the space $\mathcal{P}=Q\times A$ is the space of configurations of the machine. When writing a tape $t=(t_i) \in \Sigma^{\mathbb{Z}}$ in the form
$$ ...t_{-1}.t_0t_1... $$
we will use a dot to specify that the position zero lies at the right of the dot.

The evolution of a Turing machine is described as follows. At any given step of the algorithm, we will denote by $q\in Q$ the current state, and by $t=(t_n)_{n\in \mathbb{Z}}\in \Sigma^\mathbb{Z}$ the current tape. Given an input tape $s=(s_n)_{n\in \mathbb{Z}}\in \Sigma^{\mathbb{Z}}$ the machine runs by applying the following algorithm:

\begin{enumerate}
\item We initialize the machine by setting the current state $q$ to be $q_0$ and the current tape $t$ to be the input tape $s$.
\item If the current state is $q_{halt}$ then halt the algorithm and return $t$ as output. Otherwise compute $\delta(q,t_0)=(q',t_0',\varepsilon)$, with $\varepsilon \in \{-1,0,1\}$.
\item Change the symbol $t_0$ by $t_0'$, obtaining the tape $\tilde t=...t_{-1}.t_0't_1...$.
\item Shift $\tilde t$ by $\varepsilon$ obtaining a new tape $t'$. The new configuration is $(q',t')$. Return to step $(2)$. Our convention is that $\varepsilon=1$ (resp. $\varepsilon=-1$) corresponds to the left shift (resp. the right shift).
\end{enumerate}
This algorithm determines a global transition function 
$$\Delta: (Q\setminus \{q_{halt}\}) \times A \longrightarrow \mathcal{P}$$
that sends a configuration to the configuration obtained after applying a step of the algorithm. The global transition function can also be trivially extended to $(Q\setminus \{q_{halt}\}) \times \Sigma^{\mathbb{Z}}$, since a step of the algorithm is well-defined even for non-compactly-supported tapes.\\

 A polynomial-time Turing machine $T$ is a machine that halts for any given input $t_{in}$ of size $n$ in at most $P(n)$ steps, where $P(n)$ is some polynomially-bounded function. For our purposes, we shall follow the simplifying convention in \cite{Sig,Sig2} where we only consider inputs with $t_{k}=0$ for each $k<0$, and say that an input $t_{in}\in \Sigma^{\mathbb{Z}}$ is of size $n$ if it is of the form
$$t_{in}=...0.t_{0}...t_n0...,$$
with $t_n\neq 0$. A polynomial-time machine with polynomial-sized advice $(T,a)$ comes equipped with an infinite collection of strings $a=\{a_n\}_{n\in \mathbb{N}}$ such that $a_n\in \Sigma^{p(n)}$ for some increasing polynomially-bounded function $p(n)$. Given an input $t_{in}$ of size $n$, the machine has access to $a_n$ in one computational step. To make a very concrete dynamical interpretation of this, given $T$ and $\{a_n\}_{n\in \mathbb{N}}$, let us add a new state $\tilde q_0$ in $Q$ which will be an auxiliary initial state, and the machine with input $(\tilde q_0,t_{in})$ applies a preliminary step of the algorithm that sends $(\tilde q_0,t_{in})$ to the configuration $(q_0,t_{in}^a)$ where if $t_{in}=...0.t_0...t_n0...$, the tape $t_{in}^a$ is 
$$ t_{in}^a=...0.t_0...t_na_1^n...a_{p(n)}^n0... $$
where we wrote the advice string as $a_n=a_1^n...a_{p(n)}^n$. For such a Turing machine, the global transition function $\Delta$ can be extended to the set $\{\tilde q_0\}\times \Sigma^{\mathbb Z}$, by declaring that on elements of the form $(\tilde q_0,t_{in})$ we have $\Delta(\tilde q_0,t_{in})=(q_0,t_{in}^a)$, and any arbitrary extension for the other elements in $\{\tilde q_0\} \times \Sigma^\mathbb{Z}$. Throughout the paper, whenever we talk of a Turing machine with advice, we refer to this polynomially-bounded version of it.

\subsection{Generalized shifts and analog shifts}\label{ss:GS}
Let us now review the two classes of symbolic systems that have been related respectively to Turing machines and Turing machines with advice.\\

\textbf{Generalized shifts.}
We recall here the definition of generalized shifts, following the original definition of Moore. Let $A$ be a finite alphabet and $s=(s_i)_{i\in \mathbb{Z}} \in A^\mathbb{Z}$ an infinite sequence. A generalized shift is given by two maps. First, a map
\begin{align*}
G:A^\mathbb{Z} &\longrightarrow A^{\mathbb{Z}},
\end{align*}
together with a finite domain of dependence $D_G=\{i,...,i+l-1\}$ and a finite domain of effect $D_e=\{n,...,n+m-1\}$. These domains indicate that $G(s)$ is equal to $s$ except maybe along the positions in $D_e$, which are modified according to the symbols in the positions $D_G$ of $s$.

Secondly, a map
$$ F:A^{\mathbb{Z}}\longrightarrow \mathbb{Z}. $$
with a finite domain of dependence $D_F=\{j,...,j+r-1\}$, i.e. the image of $F$ only depends on the symbols of the sequence in the positions $D_F$. The image of $s$ by the generalized shift $\phi:A^\Z \rightarrow A^\Z$ is then defined as follows:
\begin{itemize}
\item[-] change $s$ by $G(s)$, this modifies potentially the symbols in positions $D_e$ of $s$,
\item[-] shift $G(s)$ by $F(s)$, and the resulting sequence is by definition $\phi(s)$.
\end{itemize}
As we did for Turing machines, we choose by convention that a positive (negative) value of $F(s)$ means that we shift $|F(s)|$ times to the left (right).

The previous definition is appropriate from a notational point of view to easily define the generalization of analog shift maps. However, generalized shifts admit an alternative simpler notation (as used in \cite{CMPP2}) that we will use as well. \\

\paragraph{\emph{Alternative notation.}} The map $G$, once we have fixed the domains of dependence and effect, is completely determined by assigning to each word in $A^{l}$ a word in $A^{m}$. Hence, by abusing notation, we will understand $G$ as a map 
$$G:A^l \rightarrow A^m,$$ and write $G(t_1...t_l)=t_1'...t_m'$. Similarly, the image by $F$ of $s$ is completely determined by the symbols in positions $D_F$, so abusing notation we can think of $F$ as a map $$F:A^m \rightarrow \mathbb{Z}.$$ Then, given a sequence $s$ the sequence $\phi(s)$ is obtained by replacing the symbols in positions $n,...,n+m-1$ of $s$ by $G(s_i...s_{i+l-1})$ and shifting the resulting sequence by $F(s_n...s_{n+m-1})$. Unless otherwise stated, and without loss of generality, we will assume that a generalized shift is such that $D_e=D_G$. \\

Let us state an obvious property of generalized shifts that we will use later.

\begin{lemma}\label{lem:GS}
Let $\phi$ be a generalized shift. Then for any sequence $s\in A^{\mathbb{Z}}$, its image $\phi(s)$ coincides with $s$ after a shift $-F(s)$ except along the symbols at positions $D_G$.
\end{lemma}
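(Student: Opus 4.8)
The plan is to simply unwind the definition of a generalized shift; Lemma~\ref{lem:GS} is an immediate consequence and requires no new idea. Recall that, for a fixed sequence $s\in A^{\mathbb{Z}}$, the image $\phi(s)$ is built in two stages: first one forms the sequence $G(s)$, which under the standing assumption $D_e=D_G$ differs from $s$ only in the positions belonging to $D_G$; then one shifts $G(s)$ by $F(s)$, and the result is declared to be $\phi(s)$.

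First I would fix notation for the shift operation. Let $\sigma\colon A^{\mathbb{Z}}\to A^{\mathbb{Z}}$ denote the left shift, so that $\sigma$ is a bijection with inverse the right shift $\sigma^{-1}$, and for $k\in\mathbb{Z}$ write $\sigma^{k}$ for the $k$-fold iterate (with $\sigma^{0}=\mathrm{id}$ and $\sigma^{k}=(\sigma^{-1})^{|k|}$ when $k<0$). With the paper's convention that a positive (resp.\ negative) value of $F(s)$ means shifting $|F(s)|$ times to the left (resp.\ right), the definition of $\phi$ reads precisely $\phi(s)=\sigma^{F(s)}\bigl(G(s)\bigr)$. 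Applying to $\phi(s)$ the shift by $-F(s)$, i.e.\ the map $\sigma^{-F(s)}$, then gives
$$\sigma^{-F(s)}\bigl(\phi(s)\bigr)=\sigma^{-F(s)}\sigma^{F(s)}\bigl(G(s)\bigr)=G(s),$$
since each elementary shift is a bijection of $A^{\mathbb{Z}}$ whose inverse is the opposite shift.

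To conclude, I would combine this identity with the construction of $G(s)$: because $D_e=D_G$, the sequence $G(s)$ coincides with $s$ at every position outside $D_G$. Hence $\sigma^{-F(s)}(\phi(s))$ coincides with $s$ outside $D_G$, which is exactly the assertion that $s$ and $\phi(s)$ agree after the shift $-F(s)$, except along the symbols at positions $D_G$. The only point demanding any care is the bookkeeping of shift directions, so that the composition ``shift by $F(s)$ followed by shift by $-F(s)$'' is genuinely the identity; this is immediate once one records that $\sigma$ and $\sigma^{-1}$ are mutually inverse. I do not anticipate any real obstacle.
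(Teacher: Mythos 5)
Your argument is correct and is essentially the paper's own proof: the paper likewise observes that $\phi(s)$ is obtained from $s$ by modifying symbols only at positions $D_G$ (using $D_e=D_G$) and then shifting by $F(s)$, so undoing the shift recovers a sequence agreeing with $s$ off $D_G$. Your explicit use of the shift operator $\sigma$ and the identity $\sigma^{-F(s)}\sigma^{F(s)}=\mathrm{id}$ merely makes the same bookkeeping more formal.
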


\begin{proof}
The sequence $\phi(s)$ is obtained by changing some of the symbols at positions $D_G$ of $s$ and shifting by  $F(s_i...s_{i+r-1})$. Hence shifting $\phi(s)$ by $-F(s_i...s_{i+r-1})$ yields a sequence that coincides with $s$ except maybe along the symbols in positions in $D_G$.
\end{proof}

\paragraph{\textbf{The analog shift map}}

Analog shift maps \cite{Sig} are defined similarly to generalized shifts, except that the domain of effect of $G$ can be infinite (in one or both directions). Hence in this case an analog shift map $\phi: A^\mathbb{Z} \rightarrow A^\mathbb{Z}$ is specified by 
$$ F: \mathbb{A}^{\mathbb{Z}}\longrightarrow \mathbb{Z}, \enspace D_F=\{i,...,i+r-1\} $$
with a finite domain of dependence $D_F$, and
$$G: A^\mathbb{Z} \rightarrow A^\mathbb{Z}, $$
with a finite domain of dependence $D_G=\{j,...,j+l-1\}$. However, the domain of effect of $G$ depends on the symbols in position $D_G$ of the sequence $s$ and can be a finite number of consecutive integers, a one-sided infinite sequence of consecutive integers or all $\mathbb{Z}$.

\section{Simulation with area-preserving diffeomorphisms}\label{sec:Diff}

In this section, we show how to simulate Turing machines with advice via compactly supported area-preserving diffeomorphisms of the disk. To prove this fact, we will first show how to simulate such machines via a globally injective piecewise linear area-preserving map defined on a finite number of disjoint rectangular domains on a disk. These domains will be Cantor blocks, even though their images by this piece-wise map will not always be Cantor blocks. 

\begin{defi}
The square Cantor set is the product set $C^2:=C\times C \subset I^2$, where $C$ is the (standard) Cantor ternary set in the unit interval $I=[0,1]$. A Cantor block is a block of the form $B=[\frac{a}{3^i},\frac{a+1}{3^i}]\times [\frac{b}{3^j},\frac{b+1}{3^j}] \subset I^2$, where $i,j$ are two non-negative integers and $a<3^i$, $b<3^j$ are two non-negative integers that do not have any $1$ in their ternary expansion.
\end{defi}

We identify sequences $s=(...s_{-1}.s_0s_1...)$ in $\{0,1\}^\mathbb{Z}$ with points $C^2$ via the following bijection.
\begin{align}\label{eq:ide}
\begin{split}
e:\{0,1\}^{\mathbb{Z}} &\longrightarrow C^2 \\
 (s_i)_{i\in \mathbb{Z}} &\longmapsto \left(\sum_{i=1}^\infty s_{-i}\frac{2}{3^i}, \sum_{i=0}^\infty s_i\frac{2}{3^{i+1}}\right)
 \end{split}
\end{align}

If we choose an alphabet with $k$ symbols instead of only two, we can encode all the sequences of the alphabet in a square Cantor set ${C_k}^2$ where $C_k$ is a Cantor set of the interval obtained by iteratively removing $k-1$ open subintervals. Then Cantor blocks are defined analogously. In the statement below $D$ denotes some disk of big enough radius that contains the unit square $I^2$.

\begin{theorem}\label{thm:blocks}
Given a Turing machine with advice $(T,a)$, there exist two families  $\mathcal{B}=B_1,...,B_{N-1}, B_N$ and  $\mathcal{B}'=B_1',...,B_{N-1}'$ of pairwise disjoint Cantor blocks, a rectangular domain $B_N'\subset D$ disjoint from $\mathcal{B}'$, and an injective piecewise linear area-preserving map $F: \sqcup B_i \rightarrow B_i'$ such that $F|_{C^2\cap \mathcal{B}}$ simulates $(T,a)$ in polynomial time.
\end{theorem}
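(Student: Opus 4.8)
The plan is to route the computation of $(T,a)$ through a symbolic system and then read that system off geometrically via the bijection $e$ of \eqref{eq:ide}, under which a configuration is a point of $C^2$. First I would reduce, at the cost of a polynomial slowdown, to a \emph{reversible} machine $T$ that keeps its head at position $0$ (moving-tape convention) and carries its current state in the letter occupying position $0$, so that a configuration becomes a single bi-infinite word over a finite alphabet $A$. As observed by Moore \cite{Mo} and exploited in \cite{CMPP2}, the global transition function then acts on each cylinder determined by the finitely many letters it reads exactly as a generalized shift: it rewrites the letters in a finite domain of effect (contained in $\{-1,0,1\}$) --- the $G$ part --- and shifts by $\varepsilon\in\{-1,0,1\}$ --- the $F$ part. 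Through $e$, such a cylinder is a Cantor block, a rewriting of finitely many letters is a translation by a triadic rational, and the shift by $\varepsilon$ is the affine map $(x,y)\mapsto(3^{-\varepsilon}x+c,\,3^{\varepsilon}y+c')$, whose Jacobian is $3^{-\varepsilon}\cdot 3^{\varepsilon}=1$. Composing, on every such cylinder one step of $T$ becomes an affine, area-preserving bijection onto another Cantor block. Since distinct cylinders are disjoint Cantor blocks, and since reversibility of $T$ lets us also make their images pairwise disjoint, this gives families $B_1,\dots,B_{N-1}$ and $B_1',\dots,B_{N-1}'$ realizing every ordinary step of $T$ as one iterate of $F$.

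The crux is the preliminary advice step $(\tilde q_0,t_{in})\mapsto(q_0,t_{in}^a)$: it is genuinely infinitary, an analog-shift move rather than a generalized-shift move, because no single affine piece can append a block whose content \emph{and} length both depend on the input size $n$. Following Siegelmann \cite{Sig} I would encode the whole family $\{a_n\}_{n\in\mathbb N}$ into one real number $\alpha$, chosen inside the Cantor set $C$ by using a self-delimiting binary encoding so that $\alpha$ is a legitimate left-infinite tape, and replace the $\tilde q_0$-step by: write $\alpha$ onto the left half-tape --- which, for a genuine input, is blank and hence has $x$-coordinate $0$ --- and move to state $q_0$. The modified machine then spends $\mathrm{poly}(n)$ ordinary steps scanning $\alpha$, decoding $a_n$, and copying it just to the right of position $n$, thereby reaching $(q_0,t_{in}^a)$ and running $T$ as before; all of this takes place on a tape that is no longer compactly supported, which is harmless since $\Delta$ and the generalized shift are defined on all of $\Sigma^{\mathbb Z}$ and inspect only finitely many positions at a time. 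Geometrically, the $\tilde q_0$-cylinder is a Cantor block $B_N$ on which every reachable point has $x=0$, so I set $F|_{B_N}(x,y)=(\alpha+\lambda x,\ \mu y+c)$ with $\lambda\mu=1$ and $\lambda$ small; this is injective and area-preserving, and its image $B_N'$ is a genuine rectangle --- not a Cantor block, precisely because $\alpha$ is an arbitrary, generically non-computable real --- which can be placed in $D$ disjointly from $\mathcal B'$. This single ``irrational'' affine piece is where all of the hypercomputational content of $F$ is stored.

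Finally one checks the bookkeeping: $F$ is piecewise linear on the finitely many pairwise disjoint domains $B_1,\dots,B_N$, it is area-preserving since each affine piece has Jacobian $1$, and it is globally injective since the pieces have disjoint domains and, after the above refinement, disjoint images (using reversibility of $T$ and injectivity of the $\tilde q_0$-step). For time complexity, each ordinary step of the modified machine is one iterate of $F$, while the preliminary step costs one iterate (writing $\alpha$) plus $\mathrm{poly}(n)$ ordinary iterates (decoding and copying $a_n$, which has length $p(n)$); since $T$ itself halts in $\mathrm{poly}(n)$ steps, the whole run of $(T,a)$ on an input of size $n$ is reproduced by $\mathrm{poly}(n)$ iterates of $F$ applied to the corresponding point of $C^2$, which is polynomial-time simulation in the sense of Definition~\ref{def:simu}. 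The step I expect to be the real obstacle is arranging the advice move so that it is simultaneously compatible with the finiteness, injectivity and area-preservation constraints --- that is, checking that the lone affine piece on $B_N$ together with the polynomial decoding loop faithfully reproduces the one-shot advice query; the remainder is the by-now standard symbolic-to-geometric dictionary of \cite{CMPP2}.
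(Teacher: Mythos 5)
Your overall route is the same as the paper's: pass to Siegelmann's auxiliary machine that carries the concatenated advice as a single left-infinite word, use Bennett reversibility and Moore's dictionary to turn every ordinary step into an affine area-preserving map between disjoint Cantor blocks, and inject the advice by one extra non-Cantor affine piece on the $\tilde q_0$-block. The gap is in the concrete definition of that extra piece. You set $F|_{B_N}(x,y)=(\alpha+\lambda x,\,\mu y+c)$ with $\lambda\mu=1$ and $\lambda$ small, but for $F$ to intertwine the encoding \eqref{eq:ide} with the advice step, the image of a point encoding $(\ldots 0.\tilde q_0 t_0\ldots t_n0\ldots)$ must be the point encoding $(a_\infty.\tilde q_1 t_0\ldots t_n0\ldots)$: the $x$-coordinate must become exactly $\alpha$ (fine, since reachable points have $x=0$), and the $y$-coordinate must change only its leading base-$k$ digit (the state symbol $\tilde q_0\mapsto\tilde q_1$) while keeping the entire tail $t_0t_1\ldots$ of the expansion. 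That is a translation in $y$, and since it must hold on an infinite set of $y$'s it forces $\mu=1$ (hence $\lambda=1$). With $\mu\neq 1$ the image of a genuine input is generically not even a point of the square Cantor set, so at the very first step the orbit leaves the blocks on which $F$ realizes the generalized shift and the simulation breaks down. The fix is precisely the paper's choice: $F|_{B_N}$ is the translation of $B_N$ onto the $\tilde q_1$-block composed with $\tau(x,y)=(x+\alpha,y)$, a pure translation.

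A secondary, related point: you say the image rectangle $B_N'$ ``can be placed in $D$ disjointly from $\mathcal B'$'', but its position is not free --- it must contain the correctly encoded image points, which lie inside $I^2$. Disjointness therefore has to be argued, not arranged: in the paper it follows because (i) the simulated machine is set up so that $\tilde q_1$ is not in the image of the transition function, hence no image Cantor block meets the horizontal $\tilde q_1$-strip, and (ii) the part of the translated strip that leaves that strip lies outside the unit square. Your motivation for taking $\lambda$ small (to make room) is also counterproductive under area preservation, since it stretches the rectangle in $y$. With the pure-translation piece and the $\tilde q_1$-not-in-the-image observation, the rest of your argument (disjointness of domains and images from reversibility, one iterate per step of $\tilde T$, polynomial overhead from Bennett's reduction and from scanning the advice) matches the paper's proof.
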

A concrete property that formalizes simulation in polynomial time in this context is that there exists a computable map $\varphi$ that maps the configurations of $(T,a)$ to points on a square Cantor set such that if $(T,a)$ halts with input $(q_0,t_{in})$ and output $(q_{halt},t_{out})$ in $k$ steps, then the orbit of $F$ through $\varphi(q_0,t_{in})$ reaches $\varphi(q_{halt},t_{out})$ in $q(k)$ iterations of $F$, where $q$ is some polynomially-bounded function. 

\begin{proof}
 Following \cite{Sig2}, we recall how the analog shift map is capable of simulating, in polynomial time, a given polynomial-time Turing machine with polynomial advice. This will help us understand the technical adaptions that we need to do to obtain partly injective dynamics, which is key to obtaining a map $F$ that is injective.
  
Given a Turing machine with advice $(\hat T,a)$ with states $Q$ and alphabet $\Sigma$, we first forget the ``additional" state $\tilde q_0$ that plays the role of initial state and is used to include the advice on the inputs via $\Delta(\tilde q_0, t_{in})= (q_0, t_{in}^a)$ as described in Section \ref{ss:TM}. We will denote this simpler machine by $T$, it does not have $\tilde q_0$ in its set of states and $q_0$ is its initial state instead. To be able to simulate $\hat T$ by an analog shift, the idea is to add all of the advice, i.e. all the strings concatenated in a one-sided infinite sequence $a_{\infty}=...a_2a_1a_0$, to any input $t_{in}$ in one step. Then, the dynamics are constructed to ``extract" the string corresponding to the input in polynomial time. To explain what we mean by this, first assume that we can reach a configuration of the form $$a_\infty.t_0...t_n0...,$$
that is, all of the advice has been added to an input. Then we modify $T$ so that it changes, in polynomially many steps, this tape (with a suitable state) to a tape of the following form.
$$...a_{n+2}a_n.t_0...t_na_n0...$$ 
This is explained in \cite[Chapter 12]{Sig2}: there exists a modified Turing machine $\tilde T$ that works as $T$ with a preliminary extraction of the advice. The set of states and symbols of $\tilde T$ contains those of $T$ but has as well an extra marker symbol $d$ in its alphabet and an extra set of states $\tilde q_1,...,\tilde q_r$. The initial state of $\tilde T$ is $\tilde q_1$ (instead of $q_0\in Q)$ and it simulates $T$ on inputs of the form $\tilde t=(a_\infty.t_0...t_n0...)$ in polynomial time as follows. First, in $g(n)$ steps (where $g$ is some polynomially-bounded function) the configuration $(\tilde q_1, \tilde t)$ reaches the configuration $(\tilde q_r, t')$ with
$$t'=(...a_{n+1}d.t_0...t_na_n0...), $$
and from there simulates $T$ in polynomial time. This is done step by step as $T$ would do it, but moving the marker $d$ to the left when necessary to keep track of the relevant part of the tape (everything kept at the left of $d$ is not used for the computation). To keep the notation simple, let $Q$ and $\Sigma$ denote the set of states and the alphabet of $\tilde T$. \\

We will now show how to simulate $\tilde T$ and a preliminary step that includes $a_\infty$ in the left side of the tape of any input by an analog shift map. Consider the alphabet $\tilde A=Q\sqcup \Sigma \sqcup \{\tilde q_0\}$, and we identify the configurations of $\tilde T$ to $A^\mathbb{Z}$ by the map 
\begin{align*}
\varphi : Q\times \Sigma^{\mathbb{Z}} &\longrightarrow {\tilde A}^\mathbb{Z}\\
		(q,t) &\longmapsto (...t_{-1}.qt_0t_1...).
\end{align*}
Since $\tilde T$ is a standard Turing machine (no advice is being added for the moment since we are not using $\tilde q_0$), using \cite[Theorem 7]{Mo}, there exists a generalized shift $\phi_{GS}$ with $D_F=D_G=D_e=\{-1,0,1\}$ defined on the alphabet $A=\tilde A\setminus \{\tilde q_0\}$ such that the global transition function of $\tilde T$ is semi-conjugate to $\phi_{GS}$ by this identification. We extend this generalized shift to an analog shift map $\phi:\tilde A^\mathbb{Z}\rightarrow \tilde A^{\mathbb{Z}}$ by imposing:
\begin{equation}\label{eq:inf}
\begin{cases}
G(0.\tilde q_0t)&=(a_\infty.\tilde q_1t)\enspace \text{for any } t\in \Sigma,\\
F(0.\tilde q_0 t)&=0 \enspace \text{for any } t\in \Sigma.
\end{cases}
\end{equation}
The dot in the words is used to specify that $D_G$ is still $\{-1,0,1\}$, but the domain of effect of $G$ for the words of the form $(0.\tilde q_0t)$ is now every integer smaller or equal to $1$. For each $(s_{-1}.s_0s_1)$ where $\phi_{GS}$ was already defined, the map $G$ still only changes the symbols in position $-1,0,1$. \\

Let us check that $\phi$ simulates $\hat T$ in polynomial time, and that $\phi$ can be assumed to be injective in $\varphi(Q\setminus \{q_{halt}\}\times \Sigma^{\mathbb{Z}})$. We encode an input $(...0.t_0...t_n0...)$ in $\tilde A^\mathbb{Z}$ as the sequence $(...0.\tilde q_0t_0...t_n0...)$. After one step of $\phi$, we reach $(a_{\infty}.\tilde q_1t_0...t_n0...)$, and then proceed simulating $\tilde T$. This shows that $\phi$ simulates $\hat T$ in polynomial time. If we want $\phi$ to be injective, we need to assume that $\tilde T$ is reversible. This is possible for the following reasons. A classical method of Benett \cite{Ben} shows how to simulate $T$ by some reversible Turing machine with three tapes. The simulation is polynomial in time, in fact, if $\tilde T$ halts in $K$ steps with some input,  then the reversible machine halts in $O(K^{1+\varepsilon})$ steps.  A reversible three-tape Turing machine can be simulated by a reversible one-tape Turing machine, such simulation is also polynomial in time (in this case the simulation is quadratic see e.g. \cite[Proposition 1]{Ax}). We can also assume that the initial state of $\tilde T$, which is $\tilde q_1$, is not in the image of the transition function of $\tilde T$ as discussed in \cite[Section 6.1.2]{Mor}. Then $\tilde \Delta: (Q\setminus \{q_{halt}\}) \times \Sigma^{\mathbb{Z}}\rightarrow Q\times \Sigma^{\mathbb{Z}}$, the global transition function of $\tilde T$, is injective, and so will be $\phi$.\\

We are now ready to construct a map by blocks that encodes the dynamics of $\phi$. First, the symbolic dynamics where no advice is added: an application of \cite[Lemma 0]{Mo} implies that $\phi|_{A^{\mathbb{Z}}}=\phi_{GS}$ is induced by a piecewise area-preserving linear map defined on blocks $B_1,...,B_{N-1}$ of some square Cantor set $\tilde C$ (associated to an alphabet of $|\tilde A|$ symbols). This set of blocks never intersects in the block determined by fixing the zero symbol of a sequence to be $\tilde q_0$, since $\phi_{GS}$ is defined on $A^\mathbb{Z}\subset \tilde A^\mathbb{Z}$. Each of these blocks is contained in the block determined by an element $(s_{-1},s_0,s_1)\in A^3$ of the domain of dependence. The fact that $\tilde T$ is reversible implies that the image blocks $B_1',...,B_{N-1}'$ of $B_1,...,B_N$ are also pairwise disjoint. The linear area-preserving map 
$$F:\bigsqcup_{i=1}^{N-1} B_i \rightarrow \bigsqcup_{i=1}^{N-1} B_i'$$
given by Moore's lemma induces on $\tilde C$ the generalized shift $\phi_{GS}$ via an encoding of the form of \eqref{eq:ide}. No block $B_i, B_i'$ contains a sequence whose symbol in position zero is $\tilde q_0$ by construction, since $\tilde q_0$ is not in the alphabet of $\phi_{GS}$. 

We will now extend $F$ to another rectangular domain so that the map also includes the instructions \eqref{eq:inf} as well. We will first define this rectangle and its image. The rectangle, which will be denoted by $B_N$, is given by the Cantor block of $\tilde C$ corresponding to those sequences that have at position zero the symbol $\tilde q_0$ (this is a horizontal rectangle in $I^2$). Let $\tilde B$ be the block given by those sequences that have at position zero the symbol $\tilde q_1$, this is another horizontal rectangle in $I^2$. No block $B_i'$ intersects $\tilde B$, since $\tilde q_1$ is not in the image of the transition function of $\tilde T$.  Now we define the image block $B_N'$ as the image of $\tilde B$ by the translation 
$$ \tau(x,y):=(x+\alpha,y),$$
where $\alpha$ is the number associate do the left-infinite sequence $(a_\infty.0...)$ via a map like \eqref{eq:ide}. That is, if the alphabet was of two symbols and $a_\infty=(...s_{-2}s_{-1}.0...)$, then $\alpha=\sum_{i=1}^\infty s_{-i}\frac{2}{3^{-i}}$.

It is clear that $B_N'$ is disjoint from $B_1',...,B_{N-1}'$, since we just slightly translated horizontally the horizontal Cantor block $\tilde B$ and thus the part of $B_N'$ that is not in $\tilde B$ is not even in $I^2$, so it cannot intersect any other Cantor block. We define $F|_{B_N}$ to be a composition of the trivial vertical translation of $B_N$ into $B_N'$ and $\tau$. Observe that if we start with a sequence $(...0.\tilde q_0 t_0...t_n0...)$ in $\tilde A^\mathbb{Z}$, it is encoded to $B_N\subset \tilde C$, and applying $F$ it is sent to the point whose associated sequence is $(a_\infty.\tilde q_1 t_0...t_n0)$. In other words, given an input $(...0.\tilde q_0t_0...t_n0...)$, in one step all the advice of the machine is included in the sequence by $F$, and then the next iterations of $F$ will follow the computations of $\tilde T$ as specified by $\phi_{GS}$. We conclude that $F:\bigsqcup_{i=1}^N B_i \rightarrow \bigsqcup_{i=1}^N B_i'$ induces on $\tilde C$ a map that simulates $T$ in polynomial time.
\end{proof}
Observe that even if the initial machine $T$ is already reversible, the proof of Theorem \ref{thm:blocks} shows that the simulation is only done in polynomial time, not in real-time.  This is because one needs to preprocess $a_\infty$ by introducing the auxiliary machine $\tilde T$.

The map $F$ extends to a compactly supported area-preserving diffeomorphism of the disk arguing exactly as in \cite[Proposition 5.1]{CMPP2}.

\begin{Corollary}\label{coro:area}
Given a polynomial-time Turing machine with polynomial advice $(T,a)$, there exists a compactly supported area-preserving diffeomorphism of a disk $D$ that simulates $T$ in polynomial time.
\end{Corollary}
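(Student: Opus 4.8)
The plan is to deduce the corollary from Theorem \ref{thm:blocks} by upgrading the globally injective piecewise linear map $F:\bigsqcup_{i=1}^N B_i \rightarrow \bigsqcup_{i=1}^N B_i'$ to a genuine area-preserving diffeomorphism of the disk $D$, following the argument of \cite[Proposition 5.1]{CMPP2}. First I would recall the output of Theorem \ref{thm:blocks}: the source blocks $B_1,\dots,B_N$ are pairwise disjoint, the image blocks $B_1',\dots,B_N'$ are pairwise disjoint, each block is a rectangle (a Cantor block, except for $B_N'$ which is a slight horizontal translate of a Cantor block), and $F$ restricted to each $B_i$ is an affine area-preserving map onto $B_i'$. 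The key point is that on each block $F$ is a composition of a translation and an area-preserving linear map that either stretches horizontally and compresses vertically (or vice versa) by a factor that is a power of the relevant $3$, so each $F|_{B_i}$ can be realized as the time-one map of a suitable area-preserving isotopy supported in a slightly larger rectangle.

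The key steps, in order, are: (1) Place all the rectangles $B_i$ and $B_i'$ inside $D$, which has been chosen large enough; enlarge each $B_i$ to a slightly bigger rectangle $\hat B_i$ and each $B_i'$ to $\hat B_i'$ so that the enlarged families are still pairwise disjoint and contained in the interior of $D$ (this uses that the original families are finite and pairwise disjoint, so there is a positive distance between them). (2) On each pair $(\hat B_i, \hat B_i')$, construct a compactly supported area-preserving diffeomorphism $\psi_i$ of $D$ that agrees with $F$ on $B_i$ and is supported in a neighborhood of $\hat B_i \cup \hat B_i'$: this is done by writing the affine map $F|_{B_i}$ as a product of elementary area-preserving shears and translations and interpolating each of them to the identity outside the enlarged rectangles, exactly as in \cite[Proposition 5.1]{CMPP2}. (3) Since the supports of the $\psi_i$ are pairwise disjoint, the composition $\Psi = \psi_1 \circ \cdots \circ \psi_N$ is a single compactly supported area-preserving diffeomorphism of $D$ with $\Psi|_{B_i} = F|_{B_i}$ for every $i$, hence $\Psi|_{C^2 \cap \mathcal{B}} = F|_{C^2 \cap \mathcal{B}}$ simulates $(T,a)$ in polynomial time. (4) Finally, note that passing from $F$ to $\Psi$ does not change the orbits through points of $\tilde C^2 \cap \mathcal{B}$ as long as those orbits stay inside $\bigsqcup B_i$, which they do by construction of the simulation in Theorem \ref{thm:blocks}, so the polynomial-time simulation property is inherited verbatim with the same computable encoding $\varphi$.

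The main obstacle is step (2): one must be careful that the interpolation between the affine map on $B_i$ and the identity near the boundary of $\hat B_i$ (and the interpolation of the translation carrying $\hat B_i$ to $\hat B_i'$) remains area-preserving throughout and is compactly supported. This is where the disjointness and the extra room provided by enlarging to $\hat B_i, \hat B_i'$ are essential, and it is precisely the content borrowed from \cite[Proposition 5.1]{CMPP2}; I would simply cite that construction rather than repeat it, remarking only that the one new block $B_N \to B_N'$ involves, in addition to a translation, the horizontal translation $\tau$ by $\alpha$, which is itself area-preserving and compactly supportable after enlarging, so it is handled by the same mechanism. Everything else is routine: the polynomial time bound, the computability of $\varphi$, and the fact that we only ever need the dynamics on $C^2 \cap \mathcal{B}$ all transfer immediately from Theorem \ref{thm:blocks}.
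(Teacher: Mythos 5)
Your overall route is the same as the paper's: the paper proves this corollary in one line by invoking the extension result of \cite[Proposition 5.1]{CMPP2} applied to the map $F$ of Theorem \ref{thm:blocks}. However, the way you flesh out that extension in steps (2)--(3) contains a genuine error. You build, for each $i$, a diffeomorphism $\psi_i$ supported near $\hat B_i \cup \hat B_i'$ and then claim the supports are pairwise disjoint so that $\Psi=\psi_1\circ\cdots\circ\psi_N$ restricts to $F$ on every $B_i$. But only the source blocks are pairwise disjoint among themselves and the image blocks among themselves; a source block $B_j$ and an image block $B_i'$ with $i\neq j$ will in general intersect, and in this construction they \emph{must}: if a non-halting configuration $c_1$ has image configuration $c_2$, then $e(\varphi(c_1))\in B_i$ is sent to $e(\varphi(c_2))\in B_i'$, and since $c_2$ is again a (non-halting) configuration, that same point lies in some source block $B_j$. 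So $B_i'\cap B_j\neq\emptyset$ is forced by the very fact that the dynamics can be iterated. Consequently the sets $\hat B_i\cup\hat B_i'$ cannot be made disjoint, and the composition $\psi_1\circ\cdots\circ\psi_N$ does not agree with $F$ on the blocks: a point of $B_i$, after being correctly moved into $B_i'$ by $\psi_i$, is then typically displaced again by the other factors whose supports meet $B_i'$.

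The construction in \cite[Proposition 5.1]{CMPP2} is not a block-by-block composition of disjointly supported maps but a single global extension: one regards $F$ as an area-preserving embedding of the disjoint union $\bigsqcup_i B_i$ into the disk (here the essential input is the global injectivity of $F$, i.e.\ the pairwise disjointness of the \emph{image} blocks, which Theorem \ref{thm:blocks} guarantees via reversibility), connects this embedding to the inclusion through an isotopy of area-preserving embeddings, and extends that isotopy to a compactly supported area-preserving (Hamiltonian) ambient isotopy of $D$ whose time-one map is the desired diffeomorphism. Your remaining points --- placing everything in a large disk, handling the extra translated block $B_N'$, and the observation that the encoding, the polynomial time bound and the orbit structure on $\tilde C^2\cap\mathcal{B}$ transfer verbatim --- are fine, but the extension step itself needs to be done globally as in \cite{CMPP2}, not by composing locally supported pieces.
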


The simulation is described as follows. The symbolic system $\phi:A^\mathbb{Z}\rightarrow A^\mathbb{Z}$ (that simulates $(T,a)$ in polynomial time) that is constructed in the proof of Theorem \ref{thm:blocks} is semiconjugate, along the sequences that encode configurations of $(T,a)$, to the area-preserving diffeomorphism of the disk.

Regarding computability, the encoding \eqref{eq:ide} is of course computable on compactly supported sequences, and so is the map by blocks except along the block that simulates the first step of the machine (block $B_N$ in the proof of Theorem \ref{thm:blocks}). This is necessary since the dynamics applied to that block adds to the computational process the possibly non-computable advice string.

\section{Computation with advice in ideal fluid motion}\label{s:STflows}

In this section, we introduce a definition of simulation of Turing machines by volume-preserving autonomous flows that admits a well-defined notion of time complexity. We show that for a given Turing machine with advice there exists a stationary solution to the Euler equations, on a compact three-dimensional geometric domain equipped with some Riemannian metric, that simulates it.

\subsection{Simulation and complexity with conservative ODEs}

We are interested in showing that a continuous dynamical system simulates a polynomial-time Turing machine, we thus need to specify how to measure ``computational time" (steps of the computation) in a continuous-time model. Throughout this section, every object we consider, like a manifold or a vector field, will be assumed to be smooth. Let $X$ be an autonomous vector field on a manifold $M$ (for concreteness, one might think of an ordinary differential equation defined in $\mathbb{R}^n$).  Given a point $p\in M$ the integral curve of $X$ through $p$ is the solution to the system
\begin{equation}\label{eq:ODE}
\begin{cases}
y'(t)= X(y(t)),\\
y(0)=p.
\end{cases}
\end{equation}
There are several ways to define continuous-time simulation of a Turing machine $T$ by a vector field $X$. One approach, introduced in \cite{T1} and followed in \cite{CMPP2, CMP2, CMP4}, is to define simulation by requiring that the halting (perhaps with a prescribed finite part of the output) of the machine for a given input is equivalent to the integral curve of $X$ through a computable point $p\in M$ intersecting a computable open set $U\subset M$. This definition does not impose a ``step-by-step" simulation, as it is more customary in previous works \cite{Br, GCB1, GCB2, KM99}, although the constructions in \cite{T1, CMPP2,CMP2,CMP4} do in fact provide step-by-step simulations. Inspired by \cite{GCB1,GCB2}, we consider the following definition of simulation that takes into account the behavior of the time parameter.
\begin{defi}\label{def:simu}
Let $X$ be a vector field on a manifold $M$. Let $T$ be a Turing machine and denote by $\Delta$ its global transition function, and by $\varphi: \mathcal{P}\hookrightarrow M$ an injective encoding of the configurations of $T$ in $M$. We say that $X$ simulates $T$ if for each initial configuration $(q_0,s)\in \mathcal{P}$, there is a constant $K_{s}$ such that the solution $y(t)$ to $y'(t)=X(y(t))$ with initial condition $y(0)=\varphi(q_0,s)$ satisfies 
\begin{equation}\label{eq:simu}
y(K_{s}\cdot n)=\varphi(\Delta^n(q_0,s)),
\end{equation}
for each $n \leq N$, where $N \in \mathbb{N}\sqcup \{+\infty\}$ is the halting time of $T$ with input $(q_0,s)$. 
\end{defi}
With this definition, the values of $t$ that are integer multiples of $K_{s}$ measure the steps of the algorithm. So given one computation (an initial configuration), each step of the algorithm is performed in the same amount of continuous time. We will then see why this definition behaves well for conservative autonomous flows.

\begin{Remark}\label{rem:polysim}
Definition \ref{def:simu} defines real-time simulation, i.e. one step of the machine corresponds to one step (measured by time multiples of $K_s$) of the system. Polynomial-time simulation can be defined analogously, by replacing $y(K_{s}\cdot n)$ by $y(K_{s}\cdot Q(n))$ in Equation \eqref{eq:simu} for some polynomially-bounded function $Q(n)$.
\end{Remark}
\begin{Remark}\label{rem:openset}
We can further require that the positive trajectory $y(t)$ as above intersects some open set $U_{s}$ (for example, a $\varepsilon$-neighborhood of the image by $\varphi$ of the halting configurations) if and only if $T$ halts with input $(q_0,s)$. This is interesting because the halting of $T$ with a given input is equivalent to an integral curve reaching an explicit open set. In addition, we can require that the orbit $y(t)$ either intersects $U_s$ or remains at a positive distance (bounded from below) from it. 
\end{Remark}
\begin{Remark}\label{rem:setencoding}
It is also possible to use an encoding $\varphi$ that assigns an open set to each configuration (e.g. as in \cite{CMP4}), and then we require that $y(K_{t_i}\cdot n)$ lies in the open set that encodes $\Delta^n(q_0,t_{in})$. Definition \ref{def:simu} readily generalizes to robust simulations as in \cite{GCB1}.
\end{Remark}

Given an autonomous vector field simulating a Turing machine as in Definition \ref{def:simu}, one is tempted to use the time-parameter $t$ of an integral curve as a measure of time complexity. According to the definition of step-by-step simulation, the solution at time $t=k$ corresponds to the $k^{th}$ step of the algorithm. However, as classically treated in the literature \cite{BGP,BP},  the parameter $t$ is not a well-defined measure of time for a very simple reason. One can rescale the vector field $X$ defined in $M$ by considering $\tilde X=fX$ for some positive function $f\in C^\infty(M)$ so that the orbit can compute faster or slower depending on the step of the algorithm that is being simulated. The same happens if we consider a vector field simulating a Turing machine as in Definition \ref{def:simu}, taking as the time step the values of the continuous-time parameter given by integer multiples of $K_s$. However, if $X$ is assumed to preserve some volume form $\mu \in \Omega^n(M)$ (where $n=\dim M$), the following proposition shows that there is a well-defined notion of time complexity. It captures the fact that in the conservative context, only linear speedup is possible. This can be heuristically intuited since non-linear speed-up along an orbit necessarily expands or contracts a small cube in flow-box coordinates near a point of that orbit.

\begin{lemma}
Let $X$ be a vector field on $M$ preserving some volume form $\mu\in \Omega^n(M)$ and simulating a Turing machine $T$ as in Definition \ref{def:simu}. Let $\widetilde X= f\cdot X$, with $f\in C^\infty(M)$ a positive function, be a reparametrization of $X$ that also preserves $\mu$. Then $\widetilde X$ also satisfies Definition \ref{def:simu} with the same encoding and constants $\widetilde K_s= \frac{K_s}{f(\varphi(q_0,s))}$.
\end{lemma}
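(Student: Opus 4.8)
The plan is to use the elementary fact that a reparametrization $\widetilde X = fX$ has exactly the same integral curves as $X$, only traversed at a different speed, together with the observation that the assumption of volume-preservation forces this speed to be constant along each orbit. Concretely, if $y(t)$ is the integral curve of $X$ with $y(0)=p$, then the integral curve $\widetilde y(\tau)$ of $\widetilde X$ with $\widetilde y(0)=p$ is obtained by a time change $\widetilde y(\tau)=y(\sigma(\tau))$, where $\sigma$ solves $\sigma'(\tau)=f\bigl(y(\sigma(\tau))\bigr)$ with $\sigma(0)=0$; in particular the two curves have the same image in $M$.

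The key step — and the only place where the hypothesis that \emph{both} $X$ and $\widetilde X$ preserve the \emph{same} volume form $\mu$ is used — is to show that $Xf\equiv 0$, i.e.\ that $f$ is constant along the integral curves of $X$. By Cartan's formula and $\mathcal{L}_X\mu = 0$,
\[
0 = \mathcal{L}_{fX}\mu = f\,\mathcal{L}_X\mu + df\wedge \iota_X\mu = df\wedge\iota_X\mu .
\]
On the other hand, since $\mu$ has top degree, $df\wedge\mu=0$, so applying $\iota_X$ gives $0=\iota_X(df\wedge\mu)=(Xf)\,\mu - df\wedge\iota_X\mu$, whence $df\wedge\iota_X\mu=(Xf)\,\mu$. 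Comparing with the previous display and using that $\mu$ is nowhere vanishing yields $Xf\equiv 0$.

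Finally I would combine the two observations. Along the orbit through $\varphi(q_0,s)$ the function $f$ is constant, equal to $c:=f(\varphi(q_0,s))>0$; hence the reparametrization equation becomes $\sigma'(\tau)=c$, $\sigma(0)=0$, so $\sigma(\tau)=c\tau$ and $\widetilde y(\tau)=y(c\tau)$ on the whole relevant range of $\tau$ (where $\widetilde y$ is defined precisely because $y$ is). Therefore, for each $n\le N$,
\[
\widetilde y(\widetilde K_s\, n) = y\bigl(c\,\widetilde K_s\, n\bigr) = y(K_s\, n) = \varphi\bigl(\Delta^n(q_0,s)\bigr),
\]
using $c\,\widetilde K_s = c\cdot \frac{K_s}{f(\varphi(q_0,s))} = K_s$ together with Definition \ref{def:simu} for $X$. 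This is exactly the statement that $\widetilde X$ satisfies Definition \ref{def:simu} with the same encoding $\varphi$ and constants $\widetilde K_s$; and the refinements of Remark \ref{rem:openset} carry over verbatim since $X$ and $\widetilde X$ have the same orbits as subsets of $M$.

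I expect the identity $Xf\equiv 0$ to be the only non-routine ingredient — it is precisely where volume-preservation of both fields is needed — while everything else is the standard theory of reparametrized flows.
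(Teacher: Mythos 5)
Your proposal is correct and follows essentially the same route as the paper: both proofs show that preservation of $\mu$ by $fX$ forces $f$ to be a first integral of $X$ (you via the identity $\iota_X(df\wedge\mu)=(Xf)\mu-df\wedge\iota_X\mu$, the paper by contracting $df\wedge\iota_X\mu=0$ with $X$ once more), and then both conclude that along the relevant orbit the time change is linear with slope $c=f(\varphi(q_0,s))$, giving $\widetilde y(\widetilde K_s n)=y(K_s n)=\varphi(\Delta^n(q_0,s))$. No gaps.
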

\begin{proof}
Assume that $X$ satisfies Definition \ref{def:simu} with encoding $\varphi$ and constant $K_s$ for a given initial configuration $(q_0,s)$. Let $\widetilde X= f\cdot X$ be a reparametrization of the vector field $X$ that preserves $\mu$, for some positive function $f\in C^\infty(M)$. This implies that $\mathcal{L}_{\tilde X}\mu= 0$, and is equivalent to
$$ d\iota_{fX}\mu= df\wedge \iota_X\mu=0. $$
Contracting this equation with $X$ again, we see that a necessary and sufficient condition is that $\iota_Xdf=0$, i.e. that $f$ is a first integral of $X$. Let us show that $\tilde X$ satisfies Definition \ref{def:simu} with the same encoding as $X$. Let $(q_0,s)$ be an initial configuration of the machine $T$. Let $\tilde y(t)$ be the integral curve of $\tilde X$ with initial condition $\varphi(q_0,s)$, hence the solution to the system
\begin{equation}\label{eq:repODE}
\begin{cases}
\tilde y'(t)= f\cdot X(\tilde y(t)),\\
\tilde y(0)= \varphi(q_0,s).
\end{cases}
\end{equation}
Since $f$ is a first integral of $X$, it is constant along any orbit of $X$, so we might replace $f$ by the constant $C_s=f(\varphi(q_0,s))$ in Equation \eqref{eq:repODE}. Let $y(t)$ denote the solution to the system \eqref{eq:ODE}. It is standard that $\tilde y(t)$ satisfies
$$ \tilde y(t)= y(C_s\cdot t).$$
Consider the constant $\tilde K_s=\frac{K_s}{C_s}$. Using that $X$ simulates $T$ with constants $K_s$ and encoding $\varphi$, we deduce that
$$ \tilde y(\tilde K_s \cdot n)=y (K_s \cdot n)=\varphi(\Delta^n(q_0,s)),$$
for each $n\leq N$, where $N$ is the halting time of $T$ with input $(q_0,s)$.
\end{proof}
It is clear that if a vector field $X$ satisfies the property described in Remark \ref{rem:openset}, any reparametrization also satisfies it. The time complexity of a computation can be measured in a well-defined way by using the values of $t$ that are integer multiples of $K_s$. From a physical point of view, the value of $f$ along an orbit of $X$ measures the norm of $X$ along that orbit. It is reasonable that computations along an orbit where $X$ has a greater norm (though of as a measure of the ``energy" of the system along that orbit) occur faster in terms of the continuous measure of time $t$. However, the time complexity measured discretely is invariant under these reparametrizations.

\subsection{Stationary Euler flows computing $P/poly$}
Having introduced a well-defined notion of time complexity for conservative vector fields, we will prove in this subsection that given any polynomial-time Turing machine with polynomial advice, there exists a solution to the stationary Euler equations in some compact Riemannian three-manifold that simulates it (polynomially in time) according to Definition \ref{def:simu}.\\

The Euler equations model the dynamics of an ideal (incompressible and without viscosity) fluid on a Riemannian manifold $(M,g)$ and they take the form
\begin{equation*}
\begin{cases}
\partial_t u +\nabla_u.u=-\nabla p,\\
\operatorname{div}u=0.
\end{cases}
\end{equation*}
Here $u$ is the velocity field of the fluid, the scalar function $p$ is the pressure function, and all the differential operators are defined with the ambient metric $g$. A stationary solution is a solution satisfying $\partial_t u=0$, it is hence a time-independent vector field whose integral curves define the particle paths of the fluid. The second equation ensures that $u$ is always a volume-preserving vector field with respect to the Riemannian volume. To prove that there exist stationary solutions that simulate polynomial-time Turing machines with polynomial advice, our main tool will be the connection between Euler flows and Reeb flows in contact geometry established by Etnyre and Ghrist \cite{EG}. This connection was used in \cite{CMPP2} to prove that there exist Turing complete steady Euler flows in three-dimensional compact manifolds. 

Let us recall that on a three-dimensional manifold $M$, a (cooriented) contact structure is a plane distribution $\xi$ defined as the kernel of a one-form $\alpha\in \Omega^1(M)$ that satisfies the non-integrability condition $\alpha\wedge d\alpha\neq 0$. We call $\alpha$ a contact form, and note that any positive multiple $\gamma=f\cdot \alpha$ with $f\in C^\infty(M)$ is another contact form defining $\xi$. Each contact form $\gamma$ uniquely defines a vector field $R$ called the Reeb vector field, which is determined by the equations
\begin{equation*}
\begin{cases}
\gamma(R)=1,\\
\iota_Rd\gamma=0.
\end{cases}
\end{equation*}
Reeb fields will play a role in the proof of Theorem \ref{thm:main}.

\begin{theorem}\label{thm:Eulerpoly}
Let $(T,a)$ be a polynomial-time Turing machine with polynomial advice. There exists a metric $g$ in $S^3$ and a stationary solution to the Euler equations $X$ in $(S^3,g)$ that simulates $T$ in polynomial time.
\end{theorem}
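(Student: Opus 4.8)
The plan is to combine Corollary~\ref{coro:area}, which provides a compactly supported area-preserving diffeomorphism $\psi$ of a disk $D$ simulating $(T,a)$ in polynomial time, with the Etnyre--Ghrist correspondence between Reeb flows and stationary Euler flows, following the strategy of \cite{CMPP2}. The key intermediate object is a three-dimensional flow that has $\psi$ as a first-return map on a cross-section, arranged so that the return time is constant (indeed equal to $1$) along the orbits that carry the computation; this is precisely what Definition~\ref{def:simu} requires, since then the integer times $n$ of the suspension flow read off $\Delta^n(q_0,s)$ through the encoding, giving a polynomial-time simulation in the sense of Remark~\ref{rem:polysim}.

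First I would take the mapping torus of $(D,\psi)$, i.e. the suspension flow $\partial_t$ on $D\times[0,1]/{\sim}$ with $(x,1)\sim(\psi(x),0)$. Because $\psi$ is area-preserving and compactly supported, this flow preserves the volume form $\omega\wedge dt$ and the return map to the section $D\times\{0\}$ is exactly $\psi$, with return time identically $1$; hence the suspension flow simulates $T$ polynomially in time according to Definition~\ref{def:simu}, with $K_s=1$ for every initial configuration, and it also satisfies the open-set property of Remark~\ref{rem:openset} by taking $U_s$ to be a neighborhood of the suspension of the halting blocks. Next I would realize this suspension (after smoothing, since $\psi$ is a diffeomorphism and the construction in Corollary~\ref{coro:area} is smooth) as the Reeb flow of a contact form: this is the standard fact that the mapping torus of an area-preserving diffeomorphism supports a contact form whose Reeb field is the suspension vector field, provided one first isotopes $\psi$ to be the time-one map of a symplectic flow near the part of the disk where it is non-trivial, or alternatively uses that a compactly supported area-preserving diffeomorphism of the disk is Hamiltonian and applies the contact-form construction of \cite{EG}; the compact support makes the gluing near $\partial D\times[0,1]$ trivial.

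Then, to pass from the open manifold $D\times[0,1]/{\sim}$ (a solid torus, or after capping, $S^1\times D^2$) to the closed manifold $S^3$, I would embed this mapping torus into $S^3$ as a tubular neighborhood of an unknot and extend the contact structure and the contact form to all of $S^3$; since any contact form on a solid torus that is standard near the boundary extends (one can even start from the standard tight contact structure on $S^3$ and modify it in a Darboux-type neighborhood), and since the Reeb dynamics outside the mapping torus is irrelevant to the simulation, this causes no difficulty. Finally I would invoke the Etnyre--Ghrist theorem: the Reeb field $R$ of a contact form $\gamma=f\alpha$ on $(S^3,\xi)$ is, after a suitable choice of the conformal factor $f$ and of a compatible Riemannian metric $g$ (for which $|R|_g=1$ and $R$ is divergence-free, so $\operatorname{div}_g R=0$ and $R$ is a Beltrami field $\operatorname{curl}_g R=R$), a stationary solution of the Euler equations on $(S^3,g)$. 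Composing all the encodings (the configuration-to-Cantor-block map $\varphi$ of Theorem~\ref{thm:blocks}, the disk embedding, the section $D\times\{0\}\hookrightarrow S^3$) yields the injective encoding required by Definition~\ref{def:simu}, and the reparametrization lemma above guarantees that rescaling $\alpha$ to make the metric work does not disturb the discrete time complexity.

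The main obstacle I expect is the contact realization of the suspension: one must ensure that the return time is \emph{exactly} constant (not merely bounded) along the computing orbits, which is what distinguishes Definition~\ref{def:simu} from the weaker ``reach an open set'' notion; with the plain mapping torus $D\times[0,1]/{\sim}$ and the vector field $\partial_t$ this is automatic, so the real work is checking that the Reeb vector field of the extended contact form on $S^3$ still restricts to $\partial_t$ on the mapping torus (up to the allowed volume-preserving reparametrization), i.e. that the extension does not distort the dynamics on the relevant invariant region. This is handled exactly as in \cite[Section 6]{CMPP2}: the contact form can be taken to be $dt+\lambda$ on $D\times[0,1]/{\sim}$ with $\lambda$ a primitive of the area form adapted to $\psi$, whose Reeb field is $\partial_t$, and then glued into $S^3$ keeping this normal form on a neighborhood of the computing orbits.
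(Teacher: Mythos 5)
Your overall architecture (disk diffeomorphism from Corollary~\ref{coro:area} $\to$ three-dimensional flow with a disk-like section and constant return time $\to$ Etnyre--Ghrist $\to$ stationary Euler flow on $S^3$, then check Definition~\ref{def:simu} with $K_s=1$) is the same as the paper's, but the step where you realize the constant-return-time suspension as a genuine Reeb flow has a real gap. For the form $dt+\lambda$ to descend to the unit-roof mapping torus $D\times[0,1]/(x,1)\sim(\psi(x),0)$ you need $\psi^*\lambda=\lambda$ exactly, not just $\psi^*d\lambda=d\lambda$. Since $\psi$ is compactly supported (hence Hamiltonian) one only gets $\psi^*\lambda-\lambda=dF$, and making the primitive invariant amounts to solving the cohomological equation $G-G\circ\psi=F-c$, which is obstructed by the actions of periodic orbits of $\psi$: for a period-$k$ orbit the telescoping sum forces $\sum_{i=0}^{k-1}F(\psi^i x)=kc$, which fails in general. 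Equivalently, the return time of an honest Reeb flow to a surface of section is an action-type quantity and cannot be prescribed to be identically $1$ for an arbitrary area-preserving return map. This is precisely why the paper does \emph{not} ask the constant-return-time field to be a Reeb field: it takes the Reeb field $R$ with section conjugate to $H$ provided by \cite[Theorem 3.1]{CMPP2} (whose return time is not constant), and then builds a positive rescaling $X=h\,R$ by an explicit flow-box construction so that the return time becomes exactly $1$; the Etnyre--Ghrist correspondence applies to such rescaled Reeb (Beltrami-type) fields, not only to Reeb fields, and the reparametrization lemma guarantees the discrete time complexity is unaffected. Your hedge ``up to the allowed volume-preserving reparametrization'' points in the right direction, but if you use it you are essentially forced to redo this reparametrization argument, and then the mapping torus detour buys you nothing.

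A second, related soft spot is the passage from the mapping torus to $S^3$: you assert that the solid torus with its contact form, standard near the boundary, embeds into $S^3$ and that the extension ``causes no difficulty.'' Extending a contact form from a solid torus to a closed contact $3$-manifold while keeping control of the Reeb dynamics on the invariant region is not a formality; it is exactly the technical content of \cite[Theorem 3.1]{CMPP2}, which the paper invokes as a black box to avoid this work (and which moreover lands in the standard tight $(S^3,\xi_{std})$, though for the theorem as stated any contact structure on $S^3$ would do, since the metric is not fixed in advance). The remaining ingredients of your proposal --- the composed encoding $\psi\circ e\circ E$, the constants $K_s=1$, the polynomial-time bookkeeping of Remark~\ref{rem:polysim}, and the open-set property of Remark~\ref{rem:openset} --- match the paper and are fine once the two points above are repaired.
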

The simulation will be according to Definition \ref{def:simu} and Remark \ref{rem:polysim}.
\begin{proof}
Let $(T,a)$ be a polynomial-time Turing machine with polynomial advice.  By Corollary \ref{coro:area} there exists a compactly supported area-preserving diffeomorphism of a disk
$$ H:D\longrightarrow D $$
that simulates $(T,a)$ in polynomial-time. Concretely, as done in Theorem \ref{thm:blocks}, there exists a symbolic system $\phi:A^\mathbb{Z} \rightarrow A^\mathbb{Z}$ and a computable map $E:\mathcal{P}\rightarrow A^\mathbb{Z}$ encoding the configurations of the machine such that $\phi$ simulates $(T,a)$ in polynomial time. Then $H$ satisfies  $H(e(s))=e(\phi(s))$ for every $s\in E(\mathcal{P})\subset A^\mathbb{Z}$, where $e$ denotes an encoding of the form \eqref{eq:ide} (perhaps using an expansion in base $k$ instead of two) into some square Cantor set $\tilde C^2$.

Fix the contact manifold $(S^3, \xi_{std})$, where $S^3$ is the three-sphere and $\xi_{std}$ the standard tight contact structure. By \cite[Theorem 3.1]{CMPP2}, there exists a contact form $\alpha$ whose Reeb field $R$ exhibits a Poincar\'e disk-like section $D_M \subset M$ whose first-return map is conjugate to $H$. This means that $D_M$ is an embedded disk transverse to the flow, and that there exists a smooth function $\tau:D_M \rightarrow \mathbb{R}$ such that the flow of $R$, that we denote by $\varphi_t:M\longrightarrow M$,  satisfies
$$ \varphi_{\tau(p)}(p)= \psi \circ H \circ \psi^{-1}(p), \text{ for } p\in D_M $$
where $\psi:D\longrightarrow D_M$ is a parametrization identifying $D$ with the disk-like section $D_M$.  

We will now choose a suitable positive function $h\in C^\infty(M)$ so that the first-return time of the flow of $X=h\cdot R$ at $D_M$ is constant and equal to one. First, up to multiplying $R$ by a small enough constant,  we can assume that $\tau(p)<1$ for all $p\in D_M$.  Choose flow-box coordinates $(x,y,z)$ of $U=\{\varphi_z(D')\enspace | \enspace z\in [-\varepsilon,\varepsilon]\} \cong D^2\times [-\varepsilon,\varepsilon]$, where $D'$ is a slightly larger disk-like section containing $D_M$. Denote by $F$ the first-return map on $D'$, which satisfies $F|_{D_M}=\varphi_{\tau(p)}(p)$.  In these coordinates, we have $R=\pp{}{z}$, and the integral curve with initial condition $(x,y,-\varepsilon)$ takes exactly time $\varepsilon$ to hit $D_M$.\\

Construct a smooth function $g:D'\times [-\varepsilon,\varepsilon] \rightarrow \mathbb{R}$ equal to $1$ near the boundary of $D'\times [-\varepsilon,\varepsilon]$ and such that 
$$\int_{-\varepsilon}^0\frac{1}{1+g(x,y,z)}dz=1-\tau(F^{-1}(x,y))+\varepsilon, \text{ for } (x,y) \in D_M.$$ It is clear that such a function exists: for a fixed point $(x,y)$, we are choosing a function depending on $z$ with a given value of the integral above. Varying smoothly the value of the integral we can smoothly vary the function with respect to $z$, parametrically with respect to the two parameters $x$ and $y$. 

We extend $g$ to a smooth function $h$ in $M$ equal to $1$ away from $D'\times [-\varepsilon, \varepsilon]$. Consider the vector field $X=h\cdot R$. We claim that the first-return time of $X$ to $D_M$ is constant and equal to one. Given $p\in D_M$, the solution to the ODE defined by $X$ and initial condition $p$ hits $D_M\times \{-\varepsilon\}$ at a point $(x,y,-\varepsilon)$ after time $\tau(p)-\varepsilon$, since $X=R$ along the piece of orbit outside of $D'\times [-\varepsilon,0)$. In particular, we have $(x,y)=F(p)$. On the other hand, the solution $u(t)$ to the ODE defined by $X$ and initial condition $(x,y,-\varepsilon)$ satisfies 
$$ t=\int_{-\varepsilon}^z \frac{1}{1+g(x,y,z)}dz. $$
It follows that the solution intersects $D_M=\{z=0\}$ when $t=1-\tau(F^{-1}(x,y))=1-\tau(p)$. Hence the time that the flow of $X$ takes to send a point $p$ back to $D_M$ is 
$$\tilde \tau(p)=\tau(p)-\varepsilon+1-\tau(p)+\varepsilon=1$$
 as claimed.\\

We have thus constructed a reparametrized Reeb field $X=h\cdot R$ that has a disk-like Poincar\'e section with first-return time equal to one and conjugated to $H$. It was proved in \cite{EG} that any such vector field is a stationary solution to the Euler equations for some Riemannian metric in the ambient manifold. It only remains to check that $X$ does simulate the Turing machine $T$ according to Definition \ref{def:simu}. Given a configuration $c\in \mathcal{P}$ of the machine, it is mapped to an element of $A^{\mathbb{Z}}$ by some map $E:\mathcal P \longrightarrow A^{\mathbb{Z}}$. The set of sequences $A^\mathbb{Z}$ in the image of $E$ is injectively mapped to the square Cantor set $\tilde C^2$ by the map $e$. As encoding, we choose $\tilde e= \psi\circ e \circ E$ and as constants we take $K_s=1$ for all $s$. The first-return map $F_X$ of $X$ at any point $p\in D_M$ is given by the flow of $X$ at time $1$. Hence, given an initial configuration $(q_0,s)$ of the machine $T$, we consider the solution $y(t)$ to the ODE defined by $X$ and initial condition $\tilde e(q_0,s)$. Using that $F_X=\psi\circ H \circ \psi^{-1}(p)$ and that $y(k)=F_X^k(p)$, we deduce that 
$$ y(Q(n))=\tilde e(\Delta^n(q_0,s)), $$
for each $n$ smaller than the halting time of $T$ with input $(q_0,s)$, where $Q(n)$ is a polynomially-bounded function that comes from the polynomial-time simulation of $(T,a)$ by $H$. This concludes the proof that $X$ simulates $T$ according to Definition \ref{def:simu}, polynomially in time as described in Remark \ref{rem:polysim}.
\end{proof}
We point out that, as mentioned in the context of neural networks \cite{Sig}, a given computation requires only polynomial time with respect to the size of the input and hence is simulated by a finite portion of the associated integral curve. Hence, a given computation is robust to perturbations of a size that depends on the size of the input. Indeed, only finitely many positions of the sequence need to be read to simulate a finite number of iterations of the symbolic system in $\tilde C^2$.
\begin{Remark}
The reparametrization argument used in the proof of Theorem \ref{thm:Eulerpoly} can be applied to the Turing complete Reeb flows constructed in \cite{CMPP2}. This yields a stationary solution to the Euler equations in some Riemannian three-sphere that simulates a universal Turing machine according both to the definition used in \cite{CMPP2} and to Definition \ref{def:simu} that takes into account time complexity.
\end{Remark}

We point out that Theorem 3.1 in \cite{CMPP2} can be applied to any fixed closed contact three-manifold $(M,\xi)$, or open contact three-manifold such as $(\mathbb{R}^3,\xi_{std})$, so one can replace a toroidal domain by closed three-manifold. In addition, the strong property described in Remark \ref{rem:openset} is also true, namely that there exists an open set $U$ such that for any initial configuration $(q_0,s)$, the orbit of $X$ through the explicit point $p_s\in M$ associated with $(q_0,s)$ intersects $U$ if and only if $T$ halts with input $(q_0,s)$. The orbit either intersects $U$ or stays at a positive distance from $U$ uniformly bounded from below. Even if a polynomial-time Turing machine halts in every input, the last property is not meaningless. One could, for example, modify the Turing machine so that halting only occurs on accepted inputs. Then the previous property would mean that the trajectory associated to an input that is accepted intersects some domain $U$, but if the input is not accepted then the associated trajectory remains at a positive distance of $U$. This is seen as follows. 

Let $y(t)$ be the trajectory associated with an initial configuration $(q_0,s)$. All the points representing a halting configuration are encoded in a finite collection of blocks of the square Cantor set contained in $D_M$, and no non-halting configuration is encoded there. Let $U_D$ be a small enough neighborhood of those blocks not intersecting any other block, and $U$ be defined as 
$$ U= \{ \phi^X_t(p)\enspace | \enspace p\in U_D, t\in (-\varepsilon,\varepsilon)\},$$
where $\varphi_t^X:M\longrightarrow M$ denotes the flow defined by $X$. It is clear that $y(t)$ intersects $U$ if and only if $T$ halts with input $(q_0,s)$. The orbit of the initial configuration by the first-return map will always remain in the blocks of the Cantor set associated with non-halting configurations. These blocks are at a distance greater than $\delta$, for some $\delta>0$, of the blocks containing the halting configurations. This shows that the orbit of a non-halting initial configuration stays at a positive distance (bounded from above) of $U$.

\subsection{Variations of the model}\label{s:omod}

We finish this section by discussing other models, either of complexity or of ideal fluids, that could be considered in this context.\\

\paragraph{\textbf{Time complexity via orbit length}}
In \cite{BGP}, another possible measure of the time complexity of simulations with ODEs was proposed. When there is a metric in the ambient space (for example, the Euclidean metric for an ODE defined in $\mathbb{R}^m$), a measure of time that is invariant with respect to reparametrizations is the length of the orbit with respect to the Riemannian metric. This approach is also reasonable in the context of hydrodynamics since the space is endowed with a natural Riemannian metric, the one for which the vector field solves the Euler equations. This point of view can also be taken in our construction in Theorem \ref{thm:Eulerpoly}. Indeed, the vector field $X$ has no zeroes, and the ambient manifold is compact, hence there are constants $c,C\in \mathbb{R}$ such that 
\begin{equation}\label{eq:norm}
 c< g(X,X) < C, 
\end{equation}
where $g$ denotes the metric for which $X$ is a stationary solution to the Euler equations. In particular, the length of an injective piece of an integral curve grows linearly with time. For the flow $X$, the computational steps are given by $K_s=1$ (as in Definition \ref{def:simu}). For a given input of size $n$ the machine halts after $P(n)$ steps (where $P$ is a polynomially-bounded function and $n$ is the size of the input). By Equation \eqref{eq:norm}, the length of the curve up to time $P(n)$ is polynomial as well, so polynomial complexity is well-defined using the approach proposed in \cite{BGP}. Note that it is also possible to construct a metric $\tilde g$ for which $X$ is a stationary solution to the Euler equations that has a norm everywhere equal to one. In that case, the length coincides with the time. The construction of $\tilde g$ is done using the arguments explained in \cite[Section 1.3.4 page 85]{ThC}, by considering a one-form $\alpha$ that is not any more of contact type everywhere, but is instead closed in the solid torus where it satisfies $\alpha(X)=1$.\\

\paragraph{\textbf{Other hydrodynamical systems}}

Ideal fluid flows capable of universal computation have been constructed in other situations, besides stationary flows on geometric three-dimensional domains endowed with an adapted (not fixed a priori) Riemannian metric. Indeed, a natural requirement is to impose that the metric is a fixed natural one, such as the flat metric on the three-torus or the Euclidean metric in $\mathbb{R}^3$. In \cite{CMP4}, it was shown that at the high cost of losing compactness, one can construct stationary solutions to the Euler equation in $\mathbb{R}^3$ with the Euclidean metric that can simulate a universal Turing machine. One can check that the simulation is not as good as the one defined in Definition \ref{def:simu} because it has an exponential slow-down in terms of the continuous-time parameter of the ODE. Even if we use the orbit-length approach to time complexity, one cannot simulate polynomial-time Turing machines in polynomial time using the construction done in \cite{CMP4}. A natural question is then whether, for an arbitrary Turing machine (with or without advice), there exist stationary solutions to the Euler equations in Euclidean space that are capable of simulating it in polynomial time, according to some natural definition of simulation and time complexity.

Similarly, it was proved in \cite{CMP2} that there are time-dependent solutions to the Euler equations in some high enough dimensional closed manifold that are capable of simulating a universal Turing machine. Those solutions not only have an exponential slow-down as well but also rely on constructions of polynomial ODEs \cite{GCB2} that simulate any Turing machine which might not hold for Turing machines with advice. Hence another question is if there are time-dependent solutions to the Euler equations modeling computation with advice. Perhaps this can be shown as well by doing a construction that can use the embedding results in \cite{TdL, T2}, as done in \cite{CMP2}. All these questions can be asked too for viscous fluids, as modeled by the Navier-Stokes equation.

\section{Countable generalized shifts and $P/poly$}\label{s:CGS}

In this section, we introduce a class of symbolic dynamical systems that contains in particular generalized shifts and which can be partially embedded in a countably piecewise linear map of the unit square. As we will see, our generalization is different from the analog shift map and is capable of simulating Turing machines with advice.

\subsection{Countable generalized shifts}\label{ss:CGS}
The broader class of symbolic systems that we introduce in this section should be thought of as a countable version of generalized shifts. Instead of changing the sequence according to a finite portion of it of fixed size (like $D_F$ and $D_G$), each sequence is changed according to a finite portion whose size depends on the sequence and can be arbitrarily large. This is a different generalization than analog shift maps, where only a finite portion of fixed size determines the image of the sequence, but infinitely many symbols can be changed in one step.\\

A countable generalized shift $\phi: A^\mathbb{Z} \rightarrow A^\mathbb{Z}$ is defined by the following information:
\begin{enumerate}
\item a set $P$ of pairs $\{(n_j,I_j)\in \mathbb{Z} \times A^{m_j}\}$, with $j\in \{0,...,N\}$ or $j\in \mathbb{N}$, such that for each $s\in A^{\mathbb{Z}}$ there is at most one $j$ such that $s_{n_j}...s_{n_j+m_j-1}=I_j$,
\item a map $J$ assigning to each element of $p=(n_j,I_j)\in P$ a word $J(p)=I_j'\in A^{m_j}$,
\item a map $H: P \rightarrow \mathbb{Z}$.
\end{enumerate}
We denote by $S_P \subset A^{\mathbb{Z}}$ the set of sequences $s\in A^\mathbb{Z}$ for which there is some $(n_j,I_j)\in P$ such that $s_{n_j}...s_{n_j+m_j-1}=I_j$.\\

The dynamical system is described as follows. Given some $s\in A^{\mathbb{Z}}$, if $s\not \in S_P$ then $\phi(s):=s$. Otherwise, let $p=(n_j,I_j)$ be the only pair assigned to $s$. The sequence $\phi(s)$ is obtained by changing the symbols in positions $n_j,...,n_j+m_j-1$ by $J(p)$, and then shifting by $H(p)$. \\

\paragraph{\textbf{Notation.}} To simplify notation, given a pair $p=(n_j,I_j) \in P$ we say that the symbols of the word $I_j$ are at positions $n_j...n_j+m_j-1$ of $I_j$ (instead of positions $1,..,m_j$). Furthermore, we say that a sequence $s\in A^{\mathbb{Z}}$ \textit{coincides} with $p$ (or with $I_j$) if
\begin{equation}\label{eq:coincP}
s_{n_j}...s_{n_j+m_j-1}=I_j.
\end{equation}
Similarly, if we denote a finite word with subindices, as in
$$w= w_{n}w_{n+1}...w_{m-1}w_{m},$$
 with $n,m\in \mathbb{Z}$, we say that $s$ \textit{coincides} with $w$ if
\begin{equation}\label{eq:coinc}
s_{n}...s_{m}=w_{n}...w_{m},
\end{equation}
where the left hand side denotes the symbols in position $n,n+1,...,m$ of $s$.\\

It is easy to see that a generalized shift is, in particular, a countable generalized shift.
\begin{lemma}\label{lem:finite}
Any generalized shift is a countable generalized shift with a finite set $P$.
\end{lemma}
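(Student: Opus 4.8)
The plan is to unwind both definitions and exhibit, for a given generalized shift $\phi$, an explicit finite set $P$ together with maps $J$ and $H$ realizing $\phi$ as a countable generalized shift. Recall from Section~\ref{ss:GS} that a generalized shift is determined by a domain of dependence $D_F=\{j,\dots,j+r-1\}$ for the shift map $F$, a domain of dependence $D_G=\{i,\dots,i+l-1\}$ and domain of effect $D_e=\{n,\dots,n+m-1\}$ for $G$; by the standing convention we may assume $D_e=D_G$, so $i=n$ and $l=m$. Enlarging both windows to their union $W=\{p,\dots,p+w-1\}:=D_F\cup D_G$ (a single finite block of consecutive integers containing both), we may regard both $F$ and $G$ as functions of the word $s_p\cdots s_{p+w-1}\in A^w$ occupying positions $W$; this is harmless since a function of a sub-block is in particular a function of the larger block.

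First I would set $n_0:=p$ and let the index set run over $j\in\{0,\dots,N\}$ with $N+1=|A|^w$, taking $\{I_j\}_{j}$ to be an enumeration of \emph{all} words in $A^w$, each paired with the same starting position $n_j:=p$ and the same length $m_j:=w$. The required uniqueness condition in item~(1) of the definition of a countable generalized shift — for each $s$ there is at most one $j$ with $s_{n_j}\cdots s_{n_j+m_j-1}=I_j$ — holds trivially and in fact with ``exactly one'': the word of $s$ in positions $W$ is a single well-defined element of $A^w$, hence equals exactly one $I_j$. Then I would define $J(p_j):=$ (the word in $A^w$ obtained from $I_j$ by applying $G$ to the relevant sub-block and leaving the rest of the $W$-window unchanged) and $H(p_j):=F(I_j)$. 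By construction $S_P=A^\mathbb{Z}$, so the case $s\notin S_P$ never arises, and for every $s$ the countable generalized shift replaces the symbols in positions $W$ by $J$ applied to $s$'s $W$-word and then shifts by $F$ — which is exactly $\phi(s)$, since changing the symbols outside $D_e=D_G$ within $W$ is a no-op (we defined $J$ to leave them fixed) and shifting by $F(s)$ agrees with $H$.

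The verification that this $\phi$ agrees with the original generalized shift on every sequence is then just a matter of matching the two two-step recipes (``apply $G$, then shift by $F$'' versus ``apply $J$, then shift by $H$''), which is immediate from the definitions. The set $P$ is finite — indeed of size $|A|^w$ — so the lemma follows.

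I do not expect any real obstacle here; the only mildly delicate point is bookkeeping the windows: one must make sure the same block of positions $W$ is used for $F$, for $G$, and for the pairs $(n_j,I_j)$, and that the notational convention introduced just before the lemma (indexing the symbols of $I_j$ by their actual positions $n_j,\dots,n_j+m_j-1$ rather than $1,\dots,m_j$) is respected so that ``shift by $H(p)$'' lands in the right place. Everything else is a direct transcription of one definition into the other.
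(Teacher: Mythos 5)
Your proposal is correct and follows essentially the same route as the paper's proof: pass to a common finite window for $F$ and $G$, take $P$ to be the (finite) set of all words over that window based at its left endpoint, and set $J$ and $H$ to be the induced versions of $G$ and $F$, so that every sequence matches exactly one pair and the two two-step recipes coincide. The only cosmetic difference is that you keep the window as the interval spanned by $D_F\cup D_G$ and let $J$ fix the positions outside $D_G$ explicitly, whereas the paper first redefines $F$ and $G$ so that $D_F=D_G$; these are the same argument.
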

\begin{proof}
Let $\phi: A^{\mathbb{Z}} \longrightarrow A^{\mathbb{Z}}$ be a generalized shift. Without loss of generality, we can assume that $D_F=D_G$, simply by taking the union of both domains and redefining $F$ and $G$ appropriately. Hence $\phi$ is defined by $F:A^l\rightarrow \mathbb{Z}$ and $G:A^l\rightarrow A^l$, where $D_F=D_G=\{i,...,i+l-1\}$. Let us define a countable generalized shift that is equal to $\phi$. As space of pairs, we choose
$P=\{ (i, (t_1...t_l)) \enspace | \enspace (t_1,...,t_l)\in A^l\}.$
Now we define $J((i, (t_1...t_l)))= G(t_1...t_l)$ and $H((i, (t_1...t_l))=F(t_1...t_l)$. We obtain a countable generalized shift $\psi$ such that $\psi(s)=\phi(s)$ for each $s\in A^\mathbb{Z}$.
\end{proof}

An interesting property of the set $S_P \subset A^{\mathbb{Z}}$ is that it can never be equal to $A^{\mathbb{Z}}$ whenever $P$ is not finite. The diagonal argument used in the proof of this lemma will be useful throughout the paper.

\begin{lemma}\label{lem:diag}
There is no countable generalized shift satisfying that $P$ is not a finite set and $S_P=A^{\mathbb{Z}}$.
\end{lemma}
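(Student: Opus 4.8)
The plan is to recast the hypothesis as a statement about cylinder sets and then derive the impossibility from compactness of $A^{\mathbb{Z}}$ (equivalently, from a König's-lemma style diagonal argument, which is the form reused later in the paper). So I would argue by contradiction: assume $P$ is infinite and $S_P=A^{\mathbb{Z}}$. To each pair $p_j=(n_j,I_j)\in P$ I associate the cylinder $C_j=\{\,s\in A^{\mathbb{Z}} : s_{n_j}\dots s_{n_j+m_j-1}=I_j\,\}$, i.e. the set of sequences that coincide with $p_j$. Three elementary observations then do all the work: (i) each $C_j$ is a nonempty clopen set, since the symbols outside the window $\{n_j,\dots,n_j+m_j-1\}$ are unconstrained; (ii) the defining condition on $P$ — that every $s$ coincides with at most one $p_j$ — says exactly that the $C_j$ are pairwise disjoint, and in particular distinct pairs of $P$ yield distinct cylinders, so there are infinitely many $C_j$; (iii) $S_P=A^{\mathbb{Z}}$ says exactly that $\bigcup_j C_j=A^{\mathbb{Z}}$. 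Thus we would have an infinite partition of $A^{\mathbb{Z}}$ into nonempty clopen sets.

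From here the contradiction is immediate. Since $A$ is finite, $A^{\mathbb{Z}}$ is compact, so the open cover $\{C_j\}$ has a finite subcover $C_{j_1},\dots,C_{j_r}$; then any index $j\notin\{j_1,\dots,j_r\}$ satisfies $C_j\subseteq A^{\mathbb{Z}}=C_{j_1}\cup\dots\cup C_{j_r}$ while being disjoint from every $C_{j_i}$, forcing $C_j=\emptyset$ and contradicting (i). Hence $P=\{p_{j_1},\dots,p_{j_r}\}$ would be finite. (When $|A|=1$ the space is a single point, $P$ has at most one element by the disjointness condition, and the statement is vacuous.)

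I would also spell out the explicit diagonal version of this, since it directly produces the "missing" sequence and is the template for later arguments in the paper. Call a finite word $w$, defined on a finite interval of $\mathbb{Z}$, \emph{good} if its cylinder $[w]$ is contained in no $C_j$. Good words are closed under restriction (if a restriction of $w$ were contained in some $C_j$, so would $[w]$ be), so they form a rooted, finitely branching tree; the empty word is good precisely because $P$ is infinite, for if $[w]=A^{\mathbb{Z}}$ were contained in some $C_j$ then disjointness would force $P$ to be a singleton. If the tree of good words were finite, then every word defined on a sufficiently large interval would be non-good, hence contained in some $C_j$, and those finitely many $C_j$ would cover $A^{\mathbb{Z}}$ — impossible, as above. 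So the tree of good words is infinite and, by König's lemma, has an infinite branch; its intersection is a single sequence $s^\ast\in A^{\mathbb{Z}}$ lying in no $C_j$, i.e. $s^\ast\notin S_P$, contradicting $S_P=A^{\mathbb{Z}}$. The one point requiring care — and the reason one routes through König's lemma (or compactness) instead of building $s^\ast$ symbol by symbol — is that a good word need not admit a good one-letter extension, so the construction must be organized as ``good words of every length exist, now patch them together'' rather than as a naive greedy diagonalization.
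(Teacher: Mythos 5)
Your proof is correct, and it takes a genuinely different route from the paper's. The paper argues via sequential compactness: since $P$ is infinite, one can extract pairs $p_{i_k}$ whose window size (or position) tends to infinity, pick sequences $s_k$ coinciding with them, pass to a convergent subsequence $s_{k_r}\to\tilde s$ in the metric on $A^{\mathbb{Z}}$, and show $\tilde s\notin S_P$ because any $\tilde p\in P$ coinciding with $\tilde s$ would also coincide with $s_{k_r}$ for large $k_r$, violating the ``at most one pair'' condition. You instead observe that the condition on $P$ says exactly that the cylinders $C_j$ are pairwise disjoint nonempty clopen sets, and that $S_P=A^{\mathbb{Z}}$ would make them an open cover of a compact space; a finite subcover then forces any remaining cylinder to be empty, so $P$ must be finite. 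Both proofs rest on compactness of $A^{\mathbb{Z}}$, but your cover formulation is more direct: it dispenses with the paper's preliminary case analysis (windows whose size versus position escape to infinity), with the explicit metric, and with subsequence extraction, at the cost of not exhibiting a concrete sequence outside $S_P$. Your K\"onig's-lemma variant recovers exactly that missing sequence, paralleling the paper's diagonal construction, and your closing caveat is well taken: a good word need not have a good one-letter extension, which is precisely why the argument must go through ``good words exist at every level'' rather than a greedy symbol-by-symbol diagonalization; the paper's subsequence extraction plays the same structural role.
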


\begin{proof}
Let $\phi$ be a countable generalized shift such that $P$ is not a finite set. Then there is an infinite sequence $p_{i_k}=(n_{i_k},I_{i_k})\in P$, with $I_{i_k} \in A^{m_{i_k}}$ such that $|m_{i_k}|$ or $|n_{i_k}|$ go to infinity as $k$ goes to infinity. To simplify, assume that we found a family $I_{i_k}$ such that $m_{i_k}\rightarrow \infty$, an analogous argument works for the other cases. Choose a family of sequences $s_k$ coinciding with each $p_{i_k}$. Endow $A^{\mathbb{Z}}$ with the metric
$$d(t,t')=\sum_{i=0}^k {(2N)}^{-k} (|t_k-t_k'|+|t_{-k}-t_{-k}'|),$$
where $N$ is the cardinality of $A$. Then $A^\mathbb{Z}$ is compact with this metric and the sequence $s_k$ admits a convergent subsequence $s_{k_r}$ such that $s_{k_r}\rightarrow \tilde s$ as $k_r\rightarrow \infty$. We will show that $\tilde s\not \in S_P$. Indeed, assume that there is some $\tilde p=(\tilde n, \tilde I)\in P$ (with $\tilde I$ of size $\tilde m$) such that $\tilde s$ coincides with $\tilde p$. Choose some $M$ such that $M>|\tilde n|$ and $M>|\tilde m|$. Since $s_{k_r}\rightarrow \tilde s$, there is some $K_0$ such that $|s_{k_r}-\tilde s|< N^{-M}$ for every $k_r>K_0$. This implies that $s_{k_r}$ and $\tilde s$ are equal for symbols in the positions $-M,...,M$, and hence $s_{k_r}$ coincides with $\tilde p$ for every $k_r>K_0$. We deduce that $s_{k_r}$ coincides both with $\tilde p$ and $p_{i_{k_r}}$. For a big enough $k_r$ the element $p_{i_{k_r}}$ is such that $m_{i_{k_r}}>M$, and hence $\tilde p \neq p_{i_{k_r}}$. This is a contradiction with the definition of a countable generalized shift: every sequence coincides with at most one element in $P$.
\end{proof}

It is possible to characterize, although with a property that is difficult to verify for a given example, those countable generalized shifts that are generalized shifts too.

\begin{lemma}
A countable generalized shift $\phi$ is a generalized shift if and only if there is some $N\in \mathbb{N}$ for which we can associate to each word $w=w_{-N}...w_N \in A^{2N+1}$ an integer $k_w\in \mathbb{Z}$ and a word $w'=w_{-N}'...w_N'$ satisfying the following conditions. Given $s\in A^\mathbb{Z}$, whose symbols at positions $-N,..,N$ determines a unique word $w_s$, we have:
\begin{itemize}
\item[-] If $s\in S_P$ coincides with some $p=(n_j,I_j)$, then $H(p)=k_{w_s}$ and $J(p)$ is such that for every position $r\in \{-N,...,N\}$, either $r\in \{n_j,...,n_j+m_j-1\}$ and $J(p)_r=(w_s)_r$ or $r\not \in  \{n_j,...,n_j+m_j-1\}$ and then $(w_s')_r=(w_s)_r$.
\item[-] If $s \not \in S_P$, then changing the symbols of $s$ at positions $-N,...,N$ by $w_s'$ and shifting by $k_{w_s}$ recovers $s$.
\end{itemize}
\end{lemma}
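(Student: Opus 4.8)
The plan is to pass between the fixed‑window description of a generalized shift and the pair‑based description of $\phi$ as a countable generalized shift; almost everything is a matter of unwinding definitions, the one genuine point being a perturbation argument, in the spirit of Lemma~\ref{lem:diag}, that pins down the shift amount $H(p)$.

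\textbf{The ``only if'' direction.} Assume $\phi$ is a generalized shift. As in the proof of Lemma~\ref{lem:finite} we may take its domains of dependence and of effect to be a single window $D=\{i,\dots,i+l-1\}$, so that $\phi(s)$ is obtained by replacing $s|_D$ by $G(s|_D)$ and then shifting by $F(s|_D)$, for maps $F\colon A^l\to\mathbb Z$ and $G\colon A^l\to A^l$. Choose $N$ with $D\subseteq\{-N,\dots,N\}$, and for a word $w$ on positions $-N,\dots,N$ put $k_w:=F(w|_D)$ and let $w'$ be obtained from $w$ by replacing $w|_D$ by $G(w|_D)$; these depend only on $w$. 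For $s\notin S_P$ one has $\phi(s)=s$ by definition of the countable generalized shift, while the generalized shift description says that replacing $w_s=s|_{\{-N,\dots,N\}}$ by $w'_{w_s}$ and shifting by $k_{w_s}$ returns $\phi(s)=s$ --- this is exactly the second bullet. For $s$ coinciding with $p=(n_j,I_j)$, set $W_p=\{n_j,\dots,n_j+m_j-1\}$. Granting for a moment that $H(p)=F(s|_D)$, the two descriptions of $\phi(s)$ have equal pre‑shift sequences, and reading this equality off at the positions in $W_p\setminus\{-N,\dots,N\}$, then in $W_p\cap\{-N,\dots,N\}$, then in $\{-N,\dots,N\}\setminus W_p$ gives respectively that $J(p)$ agrees with $I_j$ outside the window, that $J(p)_r=w'_{w_s,r}$ for $r\in W_p\cap\{-N,\dots,N\}$, and that $w'_{w_s,r}=w_{s,r}$ for $r\in\{-N,\dots,N\}\setminus W_p$; together these are the content of the first bullet.

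\textbf{The key step.} We must still check $H(p)=F(s|_D)$ when $s$ coincides with $p$. Let $\sigma_1$ (resp.\ $\sigma_2$) be the sequence obtained from $s$ by changing its symbols on $W_p$ according to $J(p)$ (resp.\ on $D$ according to $G(s|_D)$); each differs from $s$ in only finitely many positions. Equality of the two descriptions of $\phi(s)$ reads $\mathrm{shift}_c(\sigma_1)=\sigma_2$ with $c:=H(p)-F(s|_D)$, which forces $s_{r+c}=s_r$ for all but finitely many $r$; if $c\neq0$ this makes $s$ eventually periodic in both directions. To exclude this I would argue as in the diagonal argument of Lemma~\ref{lem:diag}: replace the symbols of $s$ on the infinite set $\mathbb Z\setminus(W_p\cup D)$ by a pattern that is not eventually periodic (possible as $|A|\geq2$; if $|A|=1$ the statement is trivial) to get $s'$ with $s'|_D=s|_D$, which still coincides with $p$ and hence, by the uniqueness clause in the definition of a countable generalized shift, with $p$ alone; running the argument for $s'$ gives $c=0$, so $H(p)=F(s'|_D)=F(s|_D)=k_{w_s}$. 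I expect this to be the main obstacle, being the only place where one argues rather than unwinds definitions.

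\textbf{The ``if'' direction.} Given $N$ and the assignment $w\mapsto(k_w,w')$ satisfying the two bullets, define a generalized shift $\psi$ with domains of dependence and of effect equal to $\{-N,\dots,N\}$ and with $F(w)=k_w$, $G(w)=w'$; I claim $\psi=\phi$. For $s\notin S_P$ the second bullet says precisely that $\psi(s)=s=\phi(s)$. For $s$ coinciding with $p=(n_j,I_j)$, the first bullet gives $H(p)=k_{w_s}$, so the two shift amounts agree, and a position‑by‑position comparison --- outside $W_p\cup\{-N,\dots,N\}$ neither map changes $s$; on $W_p\setminus\{-N,\dots,N\}$ the first bullet says $J(p)$ agrees with $I_j$, which equals $s$ there; on $W_p\cap\{-N,\dots,N\}$ one uses $J(p)_r=w'_{w_s,r}$; and on $\{-N,\dots,N\}\setminus W_p$ one uses $w'_{w_s,r}=w_{s,r}$ --- shows the two pre‑shift sequences coincide. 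Hence $\phi=\psi$ is a generalized shift, completing the proof.
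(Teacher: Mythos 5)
Your proposal is correct, and at the top level it is organized exactly as the paper's proof: the ``if'' direction is the same routine verification (build the generalized shift with window $\{-N,\dots,N\}$, $F(w)=k_w$, $G(w)=w'$, and compare position by position), and the ``only if'' direction likewise normalizes to a single window and then must pin down the shift amount before reading off the symbol conditions. Where you genuinely diverge from the paper is in that one nontrivial step, the identity $H(p)=k_{w_s}$. The paper argues with a single compactly supported test sequence: it fills the window with $w$, the positions of $p$ with $I_j$, puts zeroes elsewhere and a marker $1$ just beyond all positions that can be modified, and then tracks the last nonzero symbol of the image to force the two shift amounts to agree; the same test sequence then yields the agreement of $J(p)$ with $w'$ on common positions and of $w'$ with $w$ off $W_p$. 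You instead work with (a tail perturbation of) the given $s$: a nonzero discrepancy $c=H(p)-F(s|_D)$ forces $s_{r+c}=s_r$ off a finite set, hence eventual periodicity in both directions, which you kill by planting an aperiodic pattern outside $W_p\cup D$ (using the uniqueness clause to keep the perturbed sequence attached to the same pair $p$). Both arguments are valid and both need at least two letters (the paper's marker $1$ versus your aperiodic word); the paper's marker trick is a bit more economical, since one explicit sequence settles the shift and the symbol conditions at once, while your route is slightly heavier but stays closer to an arbitrary $s$ and makes the role of finiteness of the modified region explicit. One remark on the statement itself: the conditions you actually derive and use ($J(p)_r=(w_s')_r$ on $W_p\cap\{-N,\dots,N\}$, together with $J(p)$ agreeing with $I_j$ at positions of $W_p$ outside the window) are exactly what the paper's proof establishes, so your reading is the intended one; the printed bullet's ``$J(p)_r=(w_s)_r$'' appears to be a typo for ``$J(p)_r=(w_s')_r$''.
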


\begin{proof}
Let $\phi:A^{\mathbb{Z}} \longrightarrow A^{\mathbb{Z}}$ be a countable generalized satisfying the property in the statement. Define a generalized shift $\hat \phi:A^{\mathbb{Z}} \longrightarrow A^{\mathbb{Z}}$ by taking $D_F=D_G=\{-N,...,N\}$ and functions
$$  F(w)=k_w, \text{ for each } w\in A^{2N+1}   $$
and
$$ G(w)=w', \text{ for each } w\in A^{2N+1}. $$
If $s\in S_P$ then it follows from the first item above that $\hat \phi(s)=\phi(s)$. Furthermore if $s \not \in S_P$ then by the second item above $\hat \phi(s)=s$ and by definition of $\phi$ we have $\phi(s)=s$.\\

Conversely, let $\phi:A^{\mathbb{Z}} \longrightarrow A^{\mathbb{Z}}$ be a countable generalized shift such that there is some generalized shift $\tilde  \phi:A^{\mathbb{Z}} \longrightarrow A^{\mathbb{Z}}$, defined by functions 
$$F': A^{b-a}\rightarrow \mathbb{Z}, $$
where $D_{F'}=\{a,...,b\}$ and
$$G': A^{d-c} \rightarrow A^{d-c},$$
where $D_{G'}=\{c,...,d\}$ and such that $\phi(s)=\tilde \phi(s)$ for every $s\in A^{\mathbb{Z}}$.

 First, to simplify, we can easily construct another generalized shift $\hat \phi:A^{\mathbb{Z}} \longrightarrow A^{\mathbb{Z}}$ with associated functions $F,G$ such that $D_F,D_G=\{-N,...,N\}$ for some $N$ and $\hat \phi(s)=\tilde \phi(s)$ for every $s\in A^{\mathbb{Z}}$. To do so, let $N$ denote the greatest absolute value of the elements of $D_{F'}$ and $D_{G'}$. To define $F$, given a word $w_{-N}...w_{N} \in A^{2N+1}$, we define $F(w):=F'(w_{a}...w_{b})$. To define $G$, given a word $w_{-N}...w_N$ if $G'(w_c,...,w_d)=w'_c...w'_d$ then we define
$$ G(w_{-N}...w_N):=w_{-N}...w_{c-1}w_c'...w_a'w_{a+1}...w_N.$$
Let us now show that for this $N$ the claimed property is satisfied. Fix any word $w=w_{-N}...w_N$, and denote $F(w)$ and $G(w)$ by $k_w \in \mathbb{Z}$ and $\omega'=w'_{-N}...w_N'$ respectively. Given any sequence $s$ that coincides with $w$, since $\phi(s)=\hat \phi(s)$, the image of $s$ is obtained by replacing the symbols at positions $-N,...,N$ by $w'$ and shifting by $k_w$. If $s\not \in S_P$, then $\phi(s)=s$ and item two above is satisfied. If $s\in S_P$, then there is some $p=(n_j,I_j)$ such that $s$ coincides with $p$. Writing $I_j=s_{n_j}'...s_{n_j+m_j-1}'$, let $s'$ be the sequence
$$ s'=...0w_{-N}...w_{n_j-1}s_{n_j}'...s_{n_j+m_j-1}'w_{n_j+m_j}...w_N10...$$
which also coincides with $p$. Now the image of $s'$ can be computed either by $\phi$ or by $\hat \phi$. By looking at the $1$ placed in the last position of $s'$ which is not zero, we deduce that necessarily $H(p)=k_w$. Thus the sequence obtained from $s'$ either by changing symbols at position $-N,...,N$ by $w'$, or symbols at position $n_j,...,n_j+m_j-1$ by $J(p)$ are the same. In other words, $J(p)$ and $w'$ are the same in the positions they have in common, and $w'$ coincides with $w$ at those positions which are not in $\{n_j,...,n_j+m_j-1\}$. This is exactly the first property of the statement, and we finish the proof of the lemma.
\end{proof}

Let us give a sufficient criterion that is easy to check in practice to determine when a countable generalized shift is not a generalized shift.
\begin{defi}\label{def:constantinf}
We say that a countable generalized shift $\phi$ ``modifies at infinity" if there is a sequence of numbers $|r_{i_k}| \rightarrow \infty$ with $r_{i_k} \in \{n_{i_k},...,n_{i_k}+m_{i_k}-1\}$ for some $p_k=(n_{i_k},I_{i_k}) \in P$ such that the symbol at position $r_{i_k}$ of $I_{i_k}$ does not coincide with the symbol at position $r_{i_k}$ of $J(p_k)$.
\end{defi}

\begin{lemma}\label{lem:char}
If a countable generalized shift $\phi$ modifies at infinity, then $\phi|_{S_P}$ is not induced by a generalized shift.
\end{lemma}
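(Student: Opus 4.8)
The plan is to argue by contradiction, combining the diagonal argument from Lemma \ref{lem:diag} with the hypothesis that $\phi$ modifies at infinity. Suppose $\phi|_{S_P}$ were induced by a generalized shift $\psi$ with $D_F = D_G = \{-N,\dots,N\}$ for some fixed $N$ (after enlarging domains as in Lemma \ref{lem:finite}). The key point is that a generalized shift changes at most the symbols in positions $\{-N,\dots,N\}$ before shifting, so by Lemma \ref{lem:GS}, for \emph{every} $s \in S_P$, the sequences $s$ and $\phi(s)$ agree, after undoing the shift by $-F(s)$, at all positions outside $\{-N,\dots,N\}$. In particular $\phi$ can never change a symbol of $s$ at a position whose absolute value exceeds $N$.

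Now I would exploit the modification-at-infinity hypothesis directly. By Definition \ref{def:constantinf}, there is a sequence of pairs $p_k = (n_{i_k}, I_{i_k}) \in P$ and positions $r_{i_k} \in \{n_{i_k},\dots,n_{i_k}+m_{i_k}-1\}$ with $|r_{i_k}| \to \infty$ such that the symbol of $I_{i_k}$ at position $r_{i_k}$ differs from the symbol of $J(p_k)$ at that position. Pick $k$ large enough that $|r_{i_k}| > N$, and choose any sequence $s \in S_P$ coinciding with $p_k$; then $s$ lies in $S_P$, so $\phi(s) = \psi(s)$ by assumption. By construction $\phi$ changes the symbol of $s$ at position $r_{i_k}$ (replacing $I_{i_k}$ by $J(p_k)$), i.e. $s$ and $\phi(s)$ differ, after undoing the shift by $-H(p_k)$, at position $r_{i_k}$ with $|r_{i_k}| > N$. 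This contradicts the paragraph above: $\psi$ cannot produce such a difference. Hence no such $N$, and no such generalized shift $\psi$, can exist.

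The one point requiring a little care is the bookkeeping of shifts: the statement of Lemma \ref{lem:GS} says $s$ and $\phi(s)$ coincide \emph{after} applying the inverse shift, except on the finite window $D_G$, and one must check that the window for the putative generalized shift $\psi$ is genuinely the fixed set $\{-N,\dots,N\}$ independent of $s$ — this is exactly the defining feature of a generalized shift (finite domain of effect of bounded size) as opposed to a countable one. So the argument is: finite-window property of generalized shifts $+$ an unboundedly-far-out forced modification by $\phi$ $=$ contradiction. I expect the main obstacle to be purely expository: making sure the positions and shifts are tracked consistently across the two descriptions (the pair $(n_j,I_j)$ with its local indexing convention versus the global position in the sequence), so that "position $r_{i_k}$ of $I_{i_k}$" and "position $r_{i_k}$ of the sequence $s$" are genuinely the same global coordinate, which they are by the notational convention fixed just before Lemma \ref{lem:finite}.

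Concretely, I would write the proof roughly as follows.

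\begin{proof}[Proof plan]
Assume for contradiction that $\phi|_{S_P}$ is induced by a generalized shift $\psi$. Enlarging the domains of dependence and effect if necessary, we may assume $D_F = D_G = D_e = \{-N,\dots,N\}$ for some $N \in \mathbb{N}$. Since a step of $\psi$ only modifies symbols at positions in $\{-N,\dots,N\}$ before shifting, Lemma \ref{lem:GS} gives: for every $s$, the sequence obtained from $\psi(s)$ by the shift $-F(s)$ coincides with $s$ at every position outside $\{-N,\dots,N\}$.

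Since $\phi$ modifies at infinity, fix $k$ with $|r_{i_k}| > N$, a pair $p_k = (n_{i_k}, I_{i_k}) \in P$ and a position $r_{i_k} \in \{n_{i_k},\dots, n_{i_k}+m_{i_k}-1\}$ at which $I_{i_k}$ and $J(p_k)$ disagree. Pick any $s \in S_P$ coinciding with $p_k$ (e.g. the sequence that equals $I_{i_k}$ on $\{n_{i_k},\dots,n_{i_k}+m_{i_k}-1\}$ and is $0$ elsewhere). Then $\phi(s) = \psi(s)$. On the one hand, by the definition of a countable generalized shift, $\phi(s)$ is obtained from $s$ by replacing the symbols at positions $n_{i_k},\dots,n_{i_k}+m_{i_k}-1$ by $J(p_k)$ and shifting by $H(p_k)$; in particular, the sequence obtained from $\phi(s)$ by the shift $-H(p_k)$ differs from $s$ at position $r_{i_k}$. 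On the other hand, applying the previous paragraph to $s$ with $\psi = \phi$ on $S_P$, the sequence obtained from $\phi(s)$ by the shift $-F(s)$ agrees with $s$ outside $\{-N,\dots,N\}$, and $|r_{i_k}| > N$. Comparing the two shifts on the left half (the part of $s$ that is eventually $0$) forces $H(p_k) = F(s)$, and then the two statements about position $r_{i_k}$ are in direct contradiction. Hence no such $\psi$ exists.
\end{proof}
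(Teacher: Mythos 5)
Your overall strategy (contradiction against Lemma \ref{lem:GS} using the unboundedly far modification of Definition \ref{def:constantinf}) is the same as the paper's, but there is a genuine gap at the pivotal step, namely the claim that ``comparing the two shifts on the left half (the part of $s$ that is eventually $0$) forces $H(p_k)=F(s)$''. A tail of zeroes is invariant under shifting, so it cannot pin down the shift amount, and in fact the claimed equality is false for your choice of test sequence. Concretely (alphabet $\{0,1\}$, positive shift $=$ left shift): take $p=(10,I)$ with $I$ the word $1\,0$ occupying positions $10,11$, $J(p)=1\,1$ (so the modified position is $r=11$) and $H(p)=1$. Your $s$ has a single $1$, at position $10$, and $\phi(s)$ has $1$'s exactly at positions $9$ and $10$. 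A generalized shift with window $\{-9,\dots,9\}$ (so $N=9<|r|=11$) reproduces this on $s$ with $F(s)=0$: take the trivial shift and let $G$ write a $1$ at position $9$, inside the window. Then $\psi(s)=\phi(s)$, $F(s)\neq H(p)$, and $s$ and $\psi(s)$ agree outside the window, so Lemma \ref{lem:GS} is respected: the far-out modification has been masked by a different shift combined with an in-window change, and no contradiction can be extracted from this single sequence.

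The paper closes exactly this loophole by choosing the test sequence more carefully: $s_k$ equals $I_{i_k}$ on the word positions, has an extra marker $1$ at position $n_{i_k}+m_{i_k}$, and is zero elsewhere; the proof then runs a case analysis over all possible shifts of $\phi(s_k)$, showing that a nonzero net shift forces a mismatch with $s_k$ at a position $\geq r_{i_k}$ (detected by the marker, which is the rightmost nonzero symbol), while a zero net shift exposes the mismatch at $r_{i_k}$ itself; since $r_{i_k}\to\infty$ and $D_G$ is a fixed finite window, this contradicts Lemma \ref{lem:GS}. Equivalently, once the marker is present your ``forced equality'' $F(s_k)=H(p_k)$ does become true, because the position of the rightmost $1$ reads off the net shift, and then your comparison at position $r_{i_k}$ goes through. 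So the missing idea is precisely this marker (or some other device making the shift visible far from the origin); without it the single-sequence comparison you propose does not yield a contradiction. A small additional point: like the paper, you should also address (at least by symmetry, placing the marker on the other side) the case where $r_{i_k}\to-\infty$, since Definition \ref{def:constantinf} only guarantees $|r_{i_k}|\to\infty$.
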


\begin{proof}
Let $\phi$ be a countable generalized shift defined by $P, J$ and $H$, and assume that it modifies at infinity.

There is a sequence of numbers $|r_{i_k}| \rightarrow \infty$ with $r_{i_k} \in \{n_{i_k},...,n_{i_k}+m_{i_k}-1\}$ for some $p_k=(n_{i_k},I_{i_k}) \in P$ satisfying Definition \ref{def:constantinf}. By Lemma \ref{lem:GS}, there is always a shifted version of $\phi(s)$ that coincides with $s$ except maybe along positions in $D_G$. We will prove that this is not the case under our hypotheses. For a fixed $k$, let $s_k$ be a sequence that has zeroes everywhere except in positions $n_{i_k},...,n_{i_k}+m_{i_k}-1$ where it coincides with $I_{i_k}=a_{n_{i_k}}...a_{n_{i_k}+m_{i_k}-1}$, and also has a 1 at position $n_{i_k}+m_{i_k}$, i.e $s_k$ is of the form
$$s_k=...0a_{n_{i_k}}...a_{n_{i_k}+m_{i_k}-1}^k10... $$
The image $\phi(s_k)$ is obtained by changing the symbols in position $n_{i_k},...,n_{i_k}+m_{i_k}-1$ and shifting by $H(p_k)$.\\

 Let us show that for any integer $l$, the sequence $s_k$ and the sequence $\phi(s_k)$ shifted by $l$ do not coincide in some element in position greater or equal to $r_{i_k}$. To see this, given an arbitrary $l$ let $\phi(s_k)^l$ be the $l$-shifted sequence of $\phi(s_k)$. If $l>0$ (i.e. left shift), then the symbol in position $n_{i_k}+m_{i_k}>r_{i_k}$ of $s_k$ (which is a one) does not coincide with that of $\phi(s_k)^r$, which is a zero. If $l<0$ (right shift) then the symbol in position $n_{i_k}+m_{i_k}+r>r_{i_k}$ (which is a zero) does not coincide with the symbol of $\phi(s_k)^r$ in that position (which is a one). Finally when $r=0$, we have that the symbol in position $r_{i_k}$ does not coincide with that of $\phi(s_k)$ by hypothesis. We conclude that the symbol at some position greater or equal than $r_{i_k}$ of $s_k$ and the symbol at the same position of any shifted version of $\phi(s_k)$ are not equal. Since $r_{i_k}$ is arbitrarily large choosing an arbitrarily large $k$, we reach a contradiction with Lemma \ref{lem:GS}. We conclude that $\phi$ is not a generalized shift.
\end{proof}

Lemma \ref{lem:char} can be used to easily construct examples of countable generalized shifts that are not generalized shifts, even bijective ones. We shall call a countable generalized shift that is not a generalized shift an ``infinite generalized shift". We will refer to a countable generalized shift that is a generalized shift as a ``finite" generalized shift. 

\begin{Remark}\label{rem:01}
As for generalized shifts \cite[Lemma 1]{Mo}, any countable generalized shift is conjugate (perhaps injectively semi-conjugate depending on the cardinality of the alphabet)  to another one whose alphabet is $\Sigma=\{0,1\}$. This is done by identifying the symbols of the alphabet with large enough blocks of zeroes and ones.
\end{Remark}

\subsection{Computational power of countable generalized shifts} \label{s:CGScomp}

In \cite{Mo}, Moore showed that generalized shifts are equivalent to Turing machines, both from a dynamical and computational point of view. In this section, we analyze the computational power of countable generalized shifts and show that they can simulate Turing machines with advice. As done in previous sections, we restrict to polynomial-time Turing machines with polynomial advice, which define the complexity class $P/poly$.\\

\textbf{Technical assumptions on Turing machines with advice.} Given a Turing machine with advice $(T,a)$, we will assume that the first symbol of each advice string is always a zero, and we will only consider inputs $t_{in}$ of size $n$ such that $t_i\neq 0$ for $i=0,..,n$. It is clear that this does not restrict the computational power of the resulting polynomial-time Turing machines with polynomial advice. When the Turing machine $T$ that we consider is assumed to be reversible, we will assume as well that $q_0$ is not in the image of the transition function $\delta$. This can easily be assumed, as discussed in previous sections, see \cite[Section 6.1.2]{Mor}. The latter assumption ensures that if we take any Turing machine that is reversible, then adding advice to it keeps the global transition function injective. \\

Let us show how to simulate polynomial-time Turing machines with polynomial advice using countable generalized shifts.

\begin{theorem}\label{thm:polyCGS}
Let $(T, a)$ be a polynomial-time Turing machine with polynomial advice $a=\{a_n\}_{n\in \mathbb{N}}$. Then there is a countable generalized shift $\phi$ with some alphabet $A$ and an injective map $\varphi: \mathcal{P} \longrightarrow S_P\subset A^\mathbb{Z}$ such that $\Delta=\varphi^{-1}\circ \phi|_{S_P} \circ \varphi$. If $\Delta$ is injective, then we can assume that $\phi|_{S_P}$ is injective in all $S_P$. 
\end{theorem}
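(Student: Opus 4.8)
The plan is to follow the strategy of Theorem \ref{thm:blocks}, but to exploit the one feature a countable generalized shift has and an ordinary one does not: an infinite pair set $P$. This lets us perform the preliminary ``advice–insertion'' step \emph{directly}, without the detour through the one–sided word $a_\infty$ or the auxiliary machine $\tilde T$, and it makes the simulation step–for–step (one application of $\Delta$ equals one iteration of $\phi$), as required by the statement. I would split the construction into a \emph{finite} part that simulates $T$ as a classical Turing machine and a \emph{countably infinite} part that realizes the preliminary step $\Delta(\tilde q_0,t_{in})=(q_0,t_{in}^a)$ of Section \ref{ss:TM}.

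For the finite part: treat $T$ as an ordinary Turing machine (forgetting the advice), together with the fresh auxiliary state $\tilde q_0$. By \cite[Theorem 7]{Mo} there is a generalized shift $\phi_{GS}$ with $D_F=D_G=D_e$ a fixed window around $0$, on the alphabet $Q\sqcup\Sigma$, that is semi-conjugate, via the encoding $\varphi$ used in the proof of Theorem \ref{thm:blocks}, to the global transition function of the classical machine $T$ (i.e.\ to $\Delta$ on configurations whose advice is already inserted). By Lemma \ref{lem:finite}, $\phi_{GS}$ is a countable generalized shift with a \emph{finite} pair set $P_0$; none of its pairs mentions $\tilde q_0$, and I would simply not include the pair attached to $q_{halt}$ (there is no outgoing transition there anyway).

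For the countable part: for every $n\in\mathbb N$ and every admissible input word $w=t_0\dots t_n$ (all symbols nonzero, by the technical assumptions), adjoin to $P$ a pair $p_{n,w}=(0,I_{n,w})$ whose word reads $\tilde q_0$ at position $0$, the encoded input to its right, and then $p(n)$ blank symbols; the window length must cover these $p(n)$ extra positions, so that $J(p_{n,w})$ can overwrite it and produce exactly $\varphi(q_0,t_{in}^a)$, and we set $H(p_{n,w})=0$. Here the hypotheses ``each advice string begins with $0$'' and ``$t_i\neq 0$ for $i\le n$'' are exactly what guarantees that from any sequence matching $I_{n,w}$ one recovers $n$ and $w$ uniquely, hence that the $p_{n,w}$ are pairwise mutually exclusive and incompatible with the pairs of $P_0$ (which carry a genuine state at position $0$); so $P=P_0\sqcup\{p_{n,w}\}$ is a legitimate pair set. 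To make the encoding $\varphi$ take values in $S_P$ on \emph{all} of $\mathcal P$, I would also add one length–one pair recognizing $q_{halt}$ at position $0$ and let $\phi$ move that cell by a large shift into a region of $A^{\mathbb Z}$ disjoint from $\varphi(\mathcal P)$ and from the images of all other cells; both $\Delta$ and $\varphi^{-1}\circ\phi|_{S_P}\circ\varphi$ are then undefined on halting configurations, so the semi-conjugacy identity $\Delta=\varphi^{-1}\circ\phi|_{S_P}\circ\varphi$ holds unchanged, and one iteration of $\phi$ on $\varphi(\tilde q_0,t_{in})$ reproduces $\Delta(\tilde q_0,t_{in})=(q_0,t_{in}^a)$ in a single step.

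For the last sentence, assume $\Delta$ is injective, so that (by the standing assumptions recorded above for reversible machines) $q_0\notin\operatorname{im}\delta$. Then no configuration maps under $\Delta$ onto a $q_0$–configuration, so the images of the pairs $p_{n,w}$ — all encodings of $q_0$–configurations, pairwise distinct since $t_{in}^a$ determines $(n,w)$ — are disjoint from the images of the pairs of $P_0$; on the latter $\phi$ agrees with $\varphi\circ(\text{global transition of }T)\circ\varphi^{-1}$ and is injective because that transition is; together with the disjointness of the ejected $q_{halt}$–cell this yields injectivity of $\phi|_{S_P}$ on all of $S_P$. I expect the main obstacle to be bookkeeping rather than conceptual: one must verify carefully that the countably many recognition words $I_{n,w}$, together with $P_0$ and the $q_{halt}$–pair, still satisfy the defining ``at most one matching pair per sequence'' constraint of a countable generalized shift — which is precisely the role of the two technical hypotheses — and that the various cell images stay disjoint, so that neither the advice–insertion pairs nor the halting cell destroys injectivity in the reversible case.
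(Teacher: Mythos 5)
Your construction is essentially the one in the paper: a finite family of pairs implementing the classical transition of $T$ on windows around position $0$, plus a countable family of pairs, one for each admissible input word, that writes the whole advice string in a single application of $\phi$ (so the simulation is real-time), with the two technical hypotheses (inputs nonzero, advice beginning with $0$) doing exactly the work you assign them. The differences are cosmetic: the paper defines the finite part directly as $P_2=\{(-1,(t_{-1}qt_0))\}$ rather than invoking \cite[Theorem 7]{Mo} and Lemma \ref{lem:finite}, and it marks initial configurations with an extra symbol $d$ instead of the auxiliary state $\tilde q_0$; both devices play the same role.

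The genuine gap is in the last part. The theorem asks for injectivity of $\phi|_{S_P}$ on \emph{all} of $S_P$, and your argument only treats sequences in $\varphi(\mathcal{P})$. Note first that if you apply Lemma \ref{lem:finite} literally, $P_0$ consists of \emph{all} windows in $(Q\sqcup\Sigma)^3$, including windows with no state symbol, so the assertion that ``$\phi$ agrees with $\varphi\circ\Delta\circ\varphi^{-1}$'' on the $P_0$-part of $S_P$ is simply false: most sequences matching a $P_0$-pair are not encodings of configurations (garbage outside the window, several state symbols, non-compactly supported tapes), and on stateless windows $\phi$ is the identity, not a conjugate of $\Delta$. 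Injectivity of $\Delta$ therefore does not directly transfer; one must rule out collisions between images of arbitrary sequences matching different pairs (two $P_2$-type pairs, a $P_1$-type against a $P_2$-type, and two $P_1$-type pairs of different lengths), and this is where the paper spends the second half of its proof: it reduces a hypothetical collision $\phi(s)=\phi(t)$, $s\neq t$, to compactly supported witness sequences $s',t'$ and then derives contradictions case by case, using $q_0\notin\operatorname{im}\delta$ in the mixed case and the hypotheses ``advice starts with $0$'' versus ``input symbols nonzero'' in the case of two advice-insertion pairs. Your sentence ``pairwise distinct since $t_{in}^a$ determines $(n,w)$'' is the right idea for the latter case, but it must be run for whole cells (arbitrary symbols outside the recognition window), not just for encoded configurations. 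The same objection applies to your ad hoc ``eject $q_{halt}$ by a large shift'' pair: a sequence matching it may carry state symbols elsewhere, and you must verify its image cannot coincide with the image of any other cell. The paper avoids this extra bookkeeping by keeping only state-centered windows in its finite pair set; if you prune $P_0$ accordingly and then supply the case analysis above, your proof closes.
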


\begin{proof}
Let $(T,a)$ be a polynomial-time Turing machine with polynomial advice. Take the alphabet of the countable generalized shift to be $A=Q\cup \Sigma \cup \{d\}$, where $d$ is a symbol disjoint from $Q$ and $\Sigma$. Let $\varphi$ be the encoding function, which maps injectively the configurations of $T$ to sequences in $A^\mathbb{Z}$, defined as follows.
\begin{align}
\varphi: \mathcal{P} &\longrightarrow A^{\mathbb{Z}} \label{eq:confSeq} \\
 (q_0,t_{in}=(...0.t_0...t_n0...)) &\longmapsto  (...0d.q_0t_0...t_{n}d0...), \enspace t_i \neq 0, n \geq 0\label{eq:confSeq2}\\
 (q,(t_i)) &\longmapsto (...t_{-1}.qt_0...) \text{ otherwise.} \label{eq:confSeq3}
\end{align}
Let us define $\phi$ in terms of the space of pairs $P$ and the maps $J$ and $H$. The space of pairs $P$ is defined by $P_1\sqcup P_2$, where $P_1$ is infinite and given by
$$ P_1=\{(-1, (dq_0t_0...t_nd\underbrace{0...0}_\text{p(n)-1})) \enspace | \enspace n\in \mathbb{N} \text{ and } t_i \neq 0 \text{ for }i=0,...,n\},$$
where $p(n)$ is the polynomially-bounded function assigning to an input of size $n$ its advice of size $p(n)$. The second set of pairs $P_2$ is given by 
$$ P_2= \{(-1, (t_{-1}qt_0) \enspace | \enspace (t_{-1}qt_0) \in \Sigma \times Q \times \Sigma\}.  $$
We define the map $J$ on $P_1$ as  
$$J((-1, (dq_0t_0...t_nd\underbrace{0...0}_\text{p(n)-1}))=(0q_0t_0...t_na_1^n...a_{p(n)}^n),$$
where $a_n=a_1^n...a_{p(n)}^n$ is the advice string assigned to inputs of size $n$.
To define $J$ on $P_2$, for any $(q,t_0)$ consider $\delta(q,t_0)=(q',t',\varepsilon)$ and we define
\begin{equation*}
 J((-1, (t_{-1}qt_0))= \begin{cases} 
(t_{-1}  t' q') \text{ if } \varepsilon = +1 \\
(q't_{-1}t') \text{ if }  \varepsilon=-1\\
(t_{-1}q't') \text{ if } \varepsilon=0
 \end{cases}
 \end{equation*}
 Finally, define $H$ as $H(x)=0$ for each $x\in P_1$ and $H((-1,(t_{-1}qt_0))=\varepsilon$. Observe that a given sequence $s\in A^\mathbb{Z}$ coincides with at most one element in $P=P_1\cup P_2$, so $P$ can be used to define a countable generalized shift. The countable generalized shift $\phi$ is such that $\Delta= \varphi^{-1}\circ \phi_{S_P} \circ \varphi$. \\
 
 Let us show that if $\Delta$ is injective, then $\phi|_{S_P}$ is injective too. By contradiction, assume that the latter is not injective. Then there are two sequences $s, t \in S_P$ such that $s\neq t$ and $\phi(s)=\phi(t)$. The countable generalized shift $\phi$ changes at most a finite number of symbols of each sequence, which is then shifted by at most 1 position. Assume to simplify that both sequences are not shifted (an analogous argument works if they are shifted in any direction). Take the two unique pairs $p_1=(-1,I_1)$ and $p_2=(-1,I_2)$ such that $s,t$ coincide respectively with $p_1$ and $p_2$. Then $\phi(s)$ is equal to $s$ except maybe at those symbols in positions $D_1=\{-1,...,m_1-2\}$ and $\phi(t)$ coincides with $t$ except maybe at those symbols in position $D_2=\{-1,...,m_2-2\}$. In particular, for each $k\in D_1\setminus D_2$ we deduce that $t_k$ is equal to the symbol in position $k$ of $J(p_1)$, and for each $r\in D_2\setminus D_1$ we deduce that $s_r$ is equal to the symbol in position $r$ of $J(p_2)$. For each position $j\in D_1\cap D_2$, we must have that the symbol at position $j$ of $J(p_1)$ is equal to the symbol in position $j$ of $J(I_2)$.

We will now consider two auxiliary sequences $s', t'$, which we first define and show that they are different. We define $s'$ by
\begin{equation}
s'_i= \begin{cases}
0 \text{ if } i\not \in D_1\cup D_2,\\
s_i \text{ if } i \in D_1\\
J(p_2)_i \text{ if } i \in D_2\setminus D_1,
\end{cases}
\end{equation}
and the sequence $t'$ defined by
\begin{equation}
t'_i= \begin{cases}
0 \text{ if } i\not \in D_1\cup D_2,\\
t_i \text{ if } i \in D_2\\
J(p_1)_i \text{ if } i \in D_1\setminus D_2
\end{cases}
\end{equation} 
Our previous discussion shows that $\phi(s')=\phi(t')$. On the other hand, we know that 
\begin{itemize}
\item[-] $s$ and $t$ are equal in any position away from $D_1\cup D_2$,
\item[-] $s$ and $t$ are equal to $s'$ and $t'$ respectively in positions $D_1\cup D_2$,
\item[-] $s\neq t$,
\end{itemize}
so we deduce that we must have $s'\neq t'$.\\

Using the description of $\phi$, the fact that $\Delta$ is injective and the auxiliary sequences $s',t'$ we will reach a contradiction. Let us analyze case by case depending on $p_1$ and $p_2$. 

The first case is when $p_1,p_2\in P_2$. We have $D_1=D_2=\{-1,0,1\}$ and we deduce that $s'=...0s_{-1}.q_0s_10...$ and $t'=...0t_{-1}.q_0t_10...$. It follows that $s',t'\in \varphi(\mathcal{P})$ as per equation \eqref{eq:confSeq3}, which is a contradiction with the fact that $\Delta$ is injective. 

The second case is when $p_1 \in P_1$ and $p_2\in P_2$, then 
$$s'=...0d.q_0s_0...s_jd0...  \quad t'= ...0t_{-1}.qt_1... \text{ with } t_{-1},t_1\in \Sigma.$$
The sequence $\phi(s')$ has a $q_0$ in the zero position, while $\phi(t')$ has a $q\neq q_0$  in the zero position (since we assumed that $q_0$ is not in the image of $\delta$). We reached a contradiction with the fact that $\phi(s')=\phi(t')$.

The last case is when $p_1, p_2\in P_1$, then $p_1=(-1,I_1)$ and $p_2=(-1,I_2)$ with $I_1= (dq_0s_0...s_{m_1}d0...0)$ and $I_2=(dq_0t_0...t_{m_2}d0...0)$. Let $(a_1...a_{p(j)})$ denote the advice of the input $(...0.s_0...s_j0...)$ and $(b_1...b_{p(r)})$ denote the advice of the input $(...0.t_0...t_r0...)$. Assume that $m_1\geq m_2$, then $s'$ is of the form
 
 $$ s'=...0d.q_0s_0...s_jd\underbrace{0...0}_\text{p(j)-1}0..., $$
 and $t'$ is either of the form
  $$ t'=...0d.q_0t_0...t_rd\underbrace{0...0}_\text{p(r)-1}s_{p(r)+r+2}...s_ka_{1}...a_{p(j)}0... \quad \text{if } r+2+p(r)\leq j$$
  or of the form
 $$t'=...0d.q_0t_0...t_rd\underbrace{0...0}_\text{p(r)-1}a_{r+2+p(r)}...a_{p(j)}0... \quad \text{if } r+2+p(r)> j.$$
It follows from $\phi(s')=\phi(t')$ that $(...0.q_0s_0...s_ja_1...a_{p(j)}0...)$ is equal to 

$$(...0.q_0t_0...t_rb_1...b_{p(r)}a_{p(r+1)}...a_{p(j)}0...),$$
if $p(r)\geq j$, or to
$$(...0.q_0t_0...t_rb_1...b_{p(r)}s_{p(r)+1}...s_ja_0....a_{p(j)}0...),$$
if $p(r)<j$. If $m_1=m_2$, we must have $t'=s'$, which is a contradiction. In general, we deduce that $t_i=s_i$ for each $i=1,...,r$, and that $s_{r+1}=b_1$. However, by our technical assumptions, we know that $b_1=0$ and that $s_{r+1}\neq 0$, reaching a contradiction and finishing the proof of the theorem.
\end{proof}

The countable generalized shift constructed in the proof of Theorem \ref{thm:polyCGS} is an infinite generalized shift. Indeed, the description of $J$ on $P_1$ implies that the countable generalized shift modifies at infinity, so by Lemma \ref{lem:char} it is an infinite generalized shift. The simulation by a countable generalized shift of a polynomial-time Turing machine is done in real-time: a step of the countable generalized shift corresponds to a step of the machine. This is an advantage with respect to the simulation via analog shifts \cite{Sig}, where the first step is necessarily simulated in polynomial time with respect to the size of the input.

\subsection{Cantor set and map by blocks}

In this last section, we show that some countable generalized shifts can be, at least partially, understood as piecewise linear maps on a countable set of blocks of the square Cantor set, in the lines of \cite[Lemma 0]{Mo} for the case of a generalized shift.

From now on, we will make the simplifying assumption that a countable generalized shift is defined on the alphabet $A=\{0,1\}$, see Remark \ref{rem:01}. As done in previous sections we identify sequences in $\{0,1\}^\mathbb{Z}$ with points $C^2$ via the bijection introduced in Equation \eqref{eq:ide}.

\begin{lemma}\label{lem:bij}
Given a countable generalized shift $\phi$, there exists a piecewise linear and area-preserving map $f$ defined over a countable set of blocks into another countable set of blocks of the square Cantor set such that $\phi|_{S_P}=e^{-1}\circ f \circ e$. The following are equivalent:
\begin{itemize}
\item[-] $\phi|_{S_P}:S_P \rightarrow A^{\mathbb{Z}}$ is injective,
\item[-] the image blocks are disjoint.
\end{itemize}
\end{lemma}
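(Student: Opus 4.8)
The strategy is to mimic Moore's construction of the map by blocks for ordinary generalized shifts (\cite[Lemma 0]{Mo}), but now carried out over a countable, rather than finite, index set, and then to identify the injectivity condition with disjointness of the image blocks. First I would record that, under the encoding $e$ of \eqref{eq:ide}, fixing the symbols of a sequence $s$ at a finite window of positions $\{n,\dots,n+m-1\}$ to a prescribed word $I_j$ corresponds exactly to requiring $e(s)$ to lie in a Cantor block $B_j \subset C^2$: positions $\le -1$ are fixed by prescribing finitely many ternary digits of the $x$-coordinate, positions $\ge 0$ by prescribing finitely many ternary digits of the $y$-coordinate. Thus for each pair $p_j=(n_j,I_j)\in P$ we get one block $B_j$, and the hypothesis that every $s\in A^{\mathbb Z}$ coincides with at most one element of $P$ says precisely that the source blocks $\{B_j\}$ are pairwise disjoint. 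The set $e(S_P)$ is then $C^2 \cap \bigsqcup_j B_j$.

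Next I would construct $f$ block by block. On the block $B_j$, the action of $\phi$ is: replace the symbols in positions $n_j,\dots,n_j+m_j-1$ by $J(p_j)=I_j'$, then shift by $H(p_j)$. Replacing the window word $I_j$ by $I_j'$ sends $B_j$ affinely onto the block $B_j'$ obtained by changing the corresponding ternary digits; this is a translation (composed possibly with nothing else, since both blocks have the same side lengths $3^{-a}\times 3^{-b}$), hence area-preserving. The shift by $k:=H(p_j)$ acts on sequences and, under $e$, is realized on the relevant Cantor block by the affine map that reinterprets the first $|k|$ digits past the dot on one side as digits on the other side (for a left shift $k>0$) or vice versa; as in \cite[Lemma 0]{Mo} this is again a piecewise-linear area-preserving bijection of Cantor blocks, because the shift merely relabels which finite digit-strings are "prescribed," and on the square Cantor set $\tfrac23\sum s_{-i}3^{-i}\leftrightarrow\tfrac23\sum s_{i}3^{-(i+1)}$ digit-blocks of equal ternary length have equal Lebesgue measure. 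Composing these two affine maps gives an affine, area-preserving bijection $f_j\colon B_j\to \widehat B_j$ onto the final image block $\widehat B_j$, and I define $f=\bigsqcup_j f_j$ on $\bigsqcup_j B_j$. By construction $e^{-1}\circ f\circ e = \phi$ on $S_P$, because on each $B_j$ the affine map $f_j$ implements exactly the "change window word then shift" recipe, and off $\bigsqcup_j B_j$ there is nothing to check. Note that the source blocks are automatically disjoint (the at-most-one-pair condition), but no such constraint is imposed a priori on the image blocks $\widehat B_j$ — that is exactly the content of the equivalence to be proved.

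For the equivalence: since $e$ is a bijection onto $C^2$ and each $f_j$ is a bijection $B_j\to\widehat B_j$, the map $\phi|_{S_P}$ fails to be injective precisely when there are $j\ne j'$ and points $x\in B_j$, $x'\in B_{j'}$ of $C^2$ with $f_j(x)=f_{j'}(x')$, i.e. when $\widehat B_j\cap\widehat B_{j'}$ contains a point of $C^2$; and since each $\widehat B_j$ is itself a Cantor block, it contains points of $C^2$ in its interior and any two overlapping Cantor blocks meet $C^2$ in a set with nonempty interior in $C^2$. Conversely if the $\widehat B_j$ are pairwise disjoint then $f$ is globally injective, hence so is $\phi|_{S_P}$. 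One small point to handle carefully is the shift: a left-shift and a right-shift by different amounts could in principle send two disjoint source blocks to the same location, so the argument must phrase everything in terms of the final image blocks $\widehat B_j$ (after both operations), not the intermediate ones. I expect the main obstacle — though it is bookkeeping rather than conceptual — to be verifying carefully that the shift operation really does restrict to an \emph{area-preserving piecewise-linear bijection between finite unions of Cantor blocks} in the countable setting, i.e. that Moore's \cite[Lemma 0]{Mo} argument localizes correctly to each block $B_j$ independently of the (now unbounded) sizes $m_j$ and positions $n_j$; this works because each individual $f_j$ only ever involves the finitely many digits in the window of $p_j$ plus $|H(p_j)|$ shift digits, so nothing nonlocal or non-convergent enters.
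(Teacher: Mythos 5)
Your proposal follows the same route as the paper's proof (one source block per pair $p_j\in P$, a translation realizing the replacement of $I_j$ by $J(p_j)$, a Baker-type map realizing the shift, and the equivalence between injectivity and disjointness of image blocks via points of the Cantor set), but one step fails as written: the claim that composing the translation with the shift gives an \emph{affine} bijection $f_j\colon B_j\to\widehat B_j$ onto a \emph{single} image block. The shift by $H(p_j)$ moves the digits adjacent to the dot across it, and nothing forces those positions to lie in the window $\{n_j,\dots,n_j+m_j-1\}$: for instance, for a pair with window $\{5,6\}$ and $H(p_j)=1$ the symbol $s_0$ is not prescribed on $B_j$, so under the encoding $e$ the shift acts on the translated block as the genuine Baker map, which is only piecewise affine there and sends the block onto two blocks, not one. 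This is precisely why the paper first cuts each $B_j$ into $r_j\le 2^{|H(p_j)|}$ pieces $B_j^k$ (the preimages of the connected pieces created by iterating the Baker map, or its inverse, $|H(p_j)|$ times) and defines $f$ affinely on each $B_j^k$; the domain is then still a countable family of Cantor blocks, so the statement of the lemma is unaffected, but the disjointness criterion refers to this finer family of image blocks rather than to one block $\widehat B_j$ per pair.

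Your closing remark does gesture at the issue (``piecewise-linear bijection between finite unions of Cantor blocks''), but it contradicts the definition of $f$ you actually give, and the equivalence argument has to be restated accordingly: failure of injectivity of $\phi|_{S_P}$ corresponds to two of the sub-block images sharing a point of $C^2$, where images of distinct pieces coming from the same $B_j$ are automatically disjoint because the shift is a bijection of $C^2$, so only images attached to different pairs (or the same pair after subdivision, which cannot collide) are at stake. With that correction, the rest of your argument --- in particular that the encoding turns the ``at most one pair'' condition into disjointness of the source blocks, that $f$ maps $C^2\cap B_j^k$ onto the Cantor points of its image block, and that two Cantor blocks overlap if and only if they share points of $C^2$ --- coincides with the paper's proof.
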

\begin{proof}
Each element of $p_j=(j,I_j) \in P$ determines a block $B_j$ of the square Cantor determined by all those sequences $(s_i)$ such that $s_{n_j}...s_{n_j+m_j-1}$ coincides with $I_j$. Each block $B_j$ is first translated into the block determined by $J(p_j)$, then Baker's map is applied $H(p_j)$ times. The block $B_j$ might be cut into $r_j \leq 2^{|H(p_j)|}$ connected pieces when applying the Baker's map (or its inverse if $H(p_j)$ is negative) is applied $|H(p_j)|$ times. Let $B_j^k$ denote the preimages of those pieces. Then define the map 
$$f: \bigsqcup_{j\in \mathbb{N}} \bigsqcup_{k=0}^{r_j}B_j^k \rightarrow I^2$$
which coincides in each block $B_j^k$ with the translation and iteration of Baker's map associated with $B_j$. Clearly $f$ corresponds to $\Phi$ when applied to a point of the Cantor set. Observe that two image blocks intersect if and only if there is a point of the Cantor set in both blocks, which happens if and only if $f|_{e(S_P)}$ is not injective, which happens if and only if $\phi|_{S_P}$ is not injective.
\end{proof}
 By using a binary expansion instead of a ternary expansion in Equation \eqref{eq:ide}, the previous lemma allows us to visualize a countable generalized shift on $S_P$ as a countably piecewise linear map of the unit square.\\
 
It follows from the construction that if there is a finite word $w=(w_n,...,w_{n+m})$ such that any sequence $s\in A^\mathbb{Z}$ that coincides with $w$ is not in $S_P$, then we can as well define $f$ to be the identity in the block defined by $w$. Hence the piecewise map $f$ and the conjugacy with $\phi$ can be extended for other sequences that are not in $S_P$. The only problem, if we want a piecewise map defined on blocks, arises with those sequences that are obtained as the limit of a family of words obtained from pairs $p_j=(n_j,I_j)\in P$ with a size that tends to infinity.


\begin{thebibliography}{99}

\bibitem{Ax} H. B. Axelsen. \emph{Time complexity of tape reduction for reversible Turing machines}. International Workshop on Reversible Computation. Springer, Berlin, Heidelberg, 2011.

\bibitem{Ben} C. H. Bennett. \emph{Logical reversibility of computation}. IBM Journal of Research and Development 17.6 (1973): 525-532.


\bibitem{Bo} O. Bournez. \emph{How much can analog and hybrid systems be proved (super-) Turing}. Applied Mathematics and Computation 178.1 (2006): 58-71.

\bibitem{BC} O. Bournez, M. Cosnard. \emph{On the computational power of dynamical systems and hybrid systems}. Theor. Comput. Sci., 168-2 (1996), 417-459.

\bibitem{BGH}
O. Bournez, D.S. Gra\c{c}a, E. Hainry. \emph{Computation with perturbed dynamical systems}. J. Comput. Syst. Sci. 79 (2013) 714--724.

\bibitem{BGP} O. Bournez, D. S. Gra\c{c}a, A. Pouly. \emph{Polynomial time corresponds to solutions of polynomial ordinary differential equations of polynomial length}. Journal of the ACM (JAMC) 64.6 (2017): 1-76.

\bibitem{BP} O. Bournez, A. Pouly. \emph{A survey on analog models of computation}. Handbook of Computability and Complexity in Analysis. Springer, Cham, 2021. 173-226.

\bibitem{Br} M. S. Branicky. \emph{Universal computation and other capabilities of hybrid and continuous dynamical systems}. Theor. Comput. Sci. 138.1 (1995): 67-100.

\bibitem{BSR} M. Braverman, J. Schneider, C. Rojas. \emph{Space-bounded Church-Turing thesis and computational tractability of closed systems}. Physical review letters 115.9 (2015), 098701.

\bibitem{BY} M. Braverman, M. Yampolsky. \emph{Non-computable Julia sets}. J. Amer. Math. Soc. 19.3 (2006), 551-578.


\bibitem{ThC} R. Cardona. \emph{The geometry and topology of steady Euler flows, integrability and singular geometric structures}. PhD Thesis, Universitat Politècnica de Catalunya, 2021.

\bibitem{CMP2} R. Cardona, E. Miranda, D. Peralta-Salas. \emph{Turing universality of the incompressible Euler equations and a conjecture of Moore}. Int. Math. Res. Not. 22 (2022), 18092-18109.

\bibitem{CMP3} R. Cardona, E. Miranda, D. Peralta-Salas. \emph{Looking at Euler flows through a contact mirror: universality and undecidability}. Proceedings of the 8th European Congress of Mathematics (2023), Portoroz, Slovenia.

\bibitem{CMP4} R. Cardona, E. Miranda, D. Peralta-Salas. \emph{Computability and Beltrami fields in Euclidean space}. J. Math. Pures Appl. (9) 169 (2023), 50-81.

\bibitem{CMPP2} R. Cardona, E. Miranda, D. Peralta-Salas, F. Presas. \emph{Constructing Turing complete Euler flows in dimension 3}. Proc. Natl. Acad. Sci. 118 (2021) e2026818118.

\bibitem{EG} J. Etnyre, R. Ghrist. \emph{Contact topology and hydrodynamics I. Beltrami fields and the Seifert conjecture}. Nonlinearity 13 (2000) 441--458.


\bibitem{GCB1} D. S. Gra\c{c}a, M. L. Campagnolo, J. Buescu. \emph{Robust simulations of Turing machines with analytic maps and flows}. In: S.B. Cooper, B. Löwe, L. Torenvliet (Eds.), CiE 2005: New Computational Paradigms, in: Lecture Notes Comput. Sci., vol. 3526, Springer, 2005, pp. 169-179.

\bibitem{GCB2} D. S. Gra\c{c}a, M. L. Campagnolo, J. Buescu. \emph{Computability with polynomial differential equations}. Adv. Appl. Math. 40 (2008) 330--349.

\bibitem{GZ2} D. S. Graça, N. Zhong. \emph{Computing the exact number of periodic orbits for planar flows}. Transactions of the American Mathematical Society, 375.8 (2022), 5491-5538.

\bibitem{GZ} D. S. Gra\c{c}a, N. Zhong. \emph{Robust non-computability of dynamical systems and computability of robust dynamical systems}. Logical Methods in Computer Science (2024), 20.


\bibitem{KM99}
P. Koiran, C. Moore. \emph{Closed-form analytic maps in one and two dimensions can simulate universal Turing machines}. Theor. Comput. Sci. 210 (1999) 217--223.

\bibitem{Mo90}
C. Moore. \emph{Unpredictability and undecidability in dynamical systems}. Phys. Rev. Lett. 64 (1990) 2354--2357.

\bibitem{Mo} C. Moore. \emph{Generalized shifts: unpredictability and undecidability in dynamical systems}. Nonlinearity 4 (1991) 199--230.

\bibitem{Mor} K. Morita, \emph{Theory of reversible computing}, Springer Japan, 2017.

\bibitem{PR} M. B. Pour-El, I. Richards. \emph{The wave equation with computable initial data such that its unique solution is not computable}. Adv. Math. 39.3 (1981): 215-239.

\bibitem{RTY} J.H. Reif, J.D. Tygar, A. Yoshida. \emph{Computability and complexity of ray tracing}. Discrete Comput. Geom. 11 (1994) 265--288.

\bibitem{Sig} H. T. Siegelmann. \emph{Computation beyond the Turing limit}. Science 268-5210 (1995), p. 545-548.

\bibitem{Sig1} H. T. Siegelmann. \emph{The simple dynamics of super Turing theories}. Theor. Comput. Sci. 168.2 (1996): 461-472.

\bibitem{Sig2} H. T. Siegelmann. \emph{Neural networks and analog computation: beyond the Turing limit}. Springer Science \& Business Media, 2012.

\bibitem{S95} H.T. Siegelmann, E.D. Sontag. \emph{On the computational power of neural nets}. J. Comput. Syst. Sci. 50 (1995) 132--150.

\bibitem{T1} T. Tao. \emph{On the universality of potential well dynamics}. Dyn. PDE 14 (2017) 219-238.

\bibitem{T2} T. Tao. \emph{On the universality of the incompressible Euler equation on compact manifolds}. Discrete Cont. Dyn. Sys. A 38 (2018) 1553--1565.

\bibitem{TdL} F. Torres de Lizaur. \emph{Chaos in the incompressible Euler equation on manifolds of high dimension}. Invent. Math. 228 (2022) 687--715.

\bibitem{Tu} A. M. Turing. \emph{Systems of logic based on ordinals}. Proc. Lond. Math. Soc. 2.45 (1939), 161-228.

\bibitem{WZ}  K. Weihrauch, N. Zhong. \emph{Is wave propagation computable or can
wave computers beat the Turing machine?} Proc. Lond.
Math. Soc., 85.3 (2002), 312-332.


\end{thebibliography}
\end{document}